\newcommand{\titel}{Edge-Orders}
\definecolor{hellblau}{rgb}{0.2,0.4,1} 
\definecolor{dunkelblau}{rgb}{0,0,0.8}
\definecolor{dunkelgruen}{rgb}{0,0.5,0}
\theoremstyle{plain} 
	\newtheorem{satz}{Satz}[] 
	\newtheorem{theorem}[satz]{Theorem}
	\newtheorem{lemma}[satz]{Lemma}
\theoremstyle{remark} 
\theoremstyle{definition} 
	\newtheorem{definition}[satz]{Definition}
	\newtheorem*{conjecture}{Conjecture}
\newcommand{\longversion}[1]{\iftoggle{long}{#1}{}}
\newcommand{\shortversion}[1]{\iftoggle{long}{}{#1}}
\begin{document}
	\title{\titel}
	\author{Lena Schlipf\\LG Theoretische Informatik\\FernUniversität in Hagen, Germany
		\and Jens M. Schmidt\thanks{This research was supported by the DFG grant SCHM 3186/1-1.}\\Institute of Mathematics\\TU Ilmenau, Germany}
	\date{}
	\maketitle

\begin{abstract}
Canonical orderings and their relatives such as st-numberings have been used as a key tool in algorithmic graph theory for the last decades. Recently, a unifying concept behind all these orders has been shown: they can be described by a graph decomposition into parts that have a prescribed vertex-connectivity.

Despite extensive interest in canonical orderings, no analogue of this unifying concept is known for edge-connectivity. In this paper, we establish such a concept named \emph{edge-orders} and show how to compute \emph{(1,1)-edge-orders} of 2-edge-connected graphs as well as \emph{(2,1)-edge-orders} of 3-edge-connected graphs in linear time, respectively. While the former can be seen as the edge-variants of \emph{st-numberings}, the latter are the edge-variants of \emph{Mondshein sequences} and \emph{non-separating ear decompositions}. The methods that we use for obtaining such edge-orders differ considerably in almost all details from the ones used for their vertex-counterparts, as different graph-theoretic constructions are used in the inductive proof and standard reductions from edge- to vertex-connectivity are bound to fail.

As a first application, we consider the famous \emph{Edge-Independent Spanning Tree Conjecture}, which asserts that every $k$-edge-connected graph contains $k$ rooted spanning trees that are pairwise edge-independent. We illustrate the impact of the above edge-orders by deducing algorithms that construct 2- and 3-edge independent spanning trees of 2- and 3-edge-connected graphs, the latter of which improves the best known running time from $O(n^2)$ to linear time.
\end{abstract}

\section{Introduction}
Canonical orderings serve as a fundamental tool in various fields of algorithmic graph theory, see~\cite{Badent2011,Schmidt2014} for a wealth of over 30 applications. Under this name, canonical orderings were published 1988 for maximal planar graphs~\cite{Fraysseix1988} and soon after generalized to 3-connected planar graphs~\cite{Kant1992}. Interestingly, it turned out only recently~\cite{Schmidt2014} that different research communities did, independently and partly even earlier, invent a strict generalization of canonical orderings to arbitrary 3-connected graphs under the names \emph{(2,1)-sequences}~\cite{Mondshein1971} (anticipating many of their later planar features in 1971) and \emph{non-separating ear decompositions}~\cite{Cheriyan1988}.

In particular,~\cite{Schmidt2014} exhibits a unifying view on all these different phrasings: In essence, a canonical ordering of a graph $G=(V,E)$ is a total order on $V$ such that for (almost) all $i$, the first $i$ vertices induce a 2-connected graph and the remaining vertices induce a connected graph in $G$. The ``heart'' of canonical orderings is thus \emph{connectivity}, with all of its implications for planarity, and not \emph{planarity} itself. For this reason, Mondshein called these orders (2,1)-sequences. In accordance with Biedl and Derka~\cite{Biedl2015a}, we therefore propose the general concept of $(k,l)$-\emph{orders}, which are total orders on $V$ whose prefixes induce $k$-connected and whose suffixes induce $l$-connected graphs.

Several relatives of canonical orderings aka (2,1)-orders fit into this context: The well-known $st$\emph{-numberings} and \emph{-orientations} are actually (1,1)-orders of 2-connected graphs, \emph{chain decompositions} are (2,2)-orders of 4-connected graphs, and further orders on restricted graph classes such as planar graphs and triangulations are known (see Table~\ref{tab:orders} left).

\begin{table}[h!t]
	\small
	\centering
	\subfloat{
		\begin{tabular}{|c||p{3.8cm}|p{5.1cm}|}
			\hline
			$k \backslash l$ & 1 & 2\\
			\hline
			\hline
			1 & $st$-numbering \cite{Even1976} $O(m)$ &\\
			\hline
			2 & Mondshein sequence \cite{Schmidt2014} $O(m)$ & Chain decomposition \cite{Curran2005} $O(n^2m)$, if planar: \cite{Nakano97} $O(m)$\\
			\hline
			3 & (3,1)-order \cite{Biedl2015a} for triangulations $O(m)$ & 5-canonical decomposition~\cite{Nagai2000} for triangulations $O(m)$\\
			\hline
			4 & &\\
			\hline
		\end{tabular}
		\label{tab:vertexorders}
	}
	\subfloat{
		\begin{tabular}{|c||p{3.4cm}|p{0.2cm}|}
			\hline
			$k \backslash l$ & 1 & 2\\
			\hline
			\hline
			1 & $st$-edge-numbering \cite{Annexstein1996} $O(m)$ \emph{(+in this paper)} &\\
			\hline
			2 & (2,1)-edge-order $O(m)$ &\\
			  & \emph{(in this paper)} &\\
			\hline
			3 & &\\
			\hline
			4 & &\\
			\hline
		\end{tabular}
		\label{tab:edgeorders}
	}
	\caption{Known $(k,l)$-orders for $(k+l)$-connected graphs along with the best-known running times for constructing them (\emph{left}), and $(k,l)$-edge-orders for $(k+l)$-edge-connected graphs (\emph{right}).}
	\label{tab:orders}
\end{table}

The purpose of this paper is to extent this unifying view further to $(k,l)$\emph{-edge-orders}, in which prefixes are $k$-edge-connected and suffixes are $l$-edge-connected. Despite the many known and heavily used vertex-orders above, their natural edge-variants do not seem to be well-studied. In fact, we are only aware of one technical report by Annexstein et al.~\cite{Annexstein1996}, which deals with (1,1)-edge-orders (under the name $st$-\emph{edge-orderings}). Besides this classification, we present a simple description how a (1,1)-edge-order can be computed. Our main contribution is then an algorithm that computes a (2,1)-edge-order of a 3-edge-connected graph in time $O(m)$ (see Table~\ref{tab:orders} right), of which the corresponding result for the vertex-counterpart took over 40 years.

From a top-level perspective, this latter result follows closely the proof outline used for its vertex-counterpart in~\cite{Schmidt2014}. However, each part of this proof requires new ideas and non-trivial formalizations: BG-sequences differ from Mader-sequences (and, although not too far apart, it took a 28-page paper to show that the latter can be computed in linear time as well~\cite{Mehlhorn2015}), both non-separateness and $\overline{G_i}$ differ considerably already in their definitions, and, here, we need last-values in addition to just birth-values.

Just like (2,1)-orders, which immediately led to improvements on the best-known running time for 5 applications~\cite{Schmidt2014,Biedl2015}, (2,1)-edge-orders seem to be an important and useful tool for many graph algorithms. We give one application, which is related to the edge-independent spanning tree conjecture~\cite{Itai1988} (further are in progress): By using a (2,1)-edge-order, we show how three edge-independent spanning trees of 3-edge-connected graphs can be computed in time $O(m)$, improving the best-known running time $O(n^2)$ by Gopalan et al.~\cite{Gopalan2011}.

After giving preliminary facts on ear decompositions, we explain the proposed linear-time algorithms for computing (1,1)- and (2,1)-edge-orders in Sections~\ref{sec:11edgeorder}--\ref{sec:computing}. Section~\ref{sec:spanningtrees} then shows algorithms for computing two and three edge-independent spanning trees.

\section{Preliminaries}
We use standard graph-theoretic terminology and consider only graphs that are finite and undirected, but may contain parallel edges and self-loops. In particular, cycles may have length one or two. For $k \geq 1$, let a graph $G$ be $k$-\emph{edge-connected} if $n := |V| \geq 2$ and $G$ has no edge-cut of size less than $k$.

\begin{definition}[\cite{Lovasz1985,Whitney1932a}]
An \emph{ear decomposition} of a graph $G=(V,E)$ is a sequence $(P_0,P_1,\ldots,P_k)$ of subgraphs of $G$ that partition $E$ such that (i) $P_0$ is a cycle that is no self-loop and (ii) every $P_i$, $1 \leq i \leq k$, is either a path that intersects $P_0 \cup \cdots \cup P_{i-1}$ in its endpoints or a cycle that intersects $P_0 \cup \cdots \cup P_{i-1}$ in a unique vertex $q_i$ (which we call \emph{endpoint} as well). Each $P_i$ is called an \emph{ear}. An ear is \emph{short} if it is an edge and \emph{long} otherwise.
\end{definition}

\begin{theorem}[\cite{Robbins1939}]\label{thm:robbins}
A graph is 2-edge-connected if and only if it has an ear decomposition.
\end{theorem}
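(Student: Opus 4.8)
The plan is to prove both implications: the backward direction by induction on the number of ears, and the forward direction by a greedy construction. Throughout I would use the standard characterization that a connected graph on at least two vertices is 2-edge-connected if and only if it is \emph{bridgeless}, equivalently, every edge lies on a cycle (an edge is a bridge precisely when it lies on no cycle).

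For the backward direction, suppose $G$ has an ear decomposition $(P_0,\ldots,P_k)$ and write $G_i := P_0 \cup \cdots \cup P_i$. I would show by induction on $i$ that each $G_i$ is 2-edge-connected. The base case $G_0 = P_0$ is a cycle that is not a self-loop, hence has at least two vertices and is bridgeless. For the step, appending an ear $P_i$ to the 2-edge-connected graph $G_{i-1}$ keeps the graph connected, since $P_i$ shares at least one vertex with $G_{i-1}$, and creates no bridge: if $P_i$ is a path from $u$ to $v$, then each of its edges closes a cycle together with a $u$--$v$ path in the connected $G_{i-1}$; and if $P_i$ is a cycle attached at $q_i$, then each of its edges already lies on that cycle. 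No edge of $G_{i-1}$ loses a cycle in the process. Hence $G_k = G$ is 2-edge-connected.

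For the forward direction, assume $G$ is 2-edge-connected, so $n \geq 2$ and $G$ is bridgeless. Since $G$ is connected with at least two vertices it has a non-self-loop edge, and as this edge lies on a cycle of length at least $2$, I would take such a cycle as $P_0$. I then maintain the subgraph $H$ constructed so far and repeatedly extend it; the key claim is that whenever $H \subsetneq G$ a further ear can be appended. I would distinguish two cases: if some edge $e \notin E(H)$ has both endpoints in $V(H)$, then $e$ extends $H$ as a short ear (or, if $e$ is a self-loop, as a length-one cycle-ear); otherwise, since $G$ is connected, there is an edge $e$ from a vertex $x \in V(H)$ to a vertex outside $V(H)$, and since $e$ is not a bridge it lies on a cycle $C$. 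Tracing $C$ from $x$ through $e$ until it first returns to $V(H)$ yields a path whose internal vertices avoid $H$, that is, a valid path-ear. Each extension adds at least one edge, so the process terminates with $H = G$ and outputs an ear decomposition.

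I expect the main obstacle to be this extension claim in the forward direction. The crux is invoking 2-edge-connectivity in exactly the right form, bridgelessness, to produce the traced path-ear, and handling the multigraph subtleties that the definition permits: parallel edges surface as short ears, self-loops as length-one cycle-ears, and one must ensure that $P_0$ itself is chosen to be a genuine cycle of length at least $2$.
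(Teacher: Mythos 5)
Your proof is correct in substance, but note first that the paper does not prove this statement at all: it is quoted as a classical theorem with a citation to Robbins (1939), so there is no in-paper argument to compare against. What you give is the standard textbook proof --- induction on ears for sufficiency (a cycle with at least two vertices is connected and bridgeless, and attaching an ear keeps the graph connected while placing every new edge on a cycle and creating no new bridges among old edges), and a greedy ear-extension argument for necessity, driven by bridgelessness. One small repair is needed in your second extension case: tracing the cycle $C$ from $x$ through $e$ until it first returns to $V(H)$ need not produce a \emph{path} ear, since the first return vertex may be $x$ itself; in that event the traced subgraph is a cycle meeting $H$ in the unique vertex $x$. This is still a legal ear under the paper's definition, which explicitly allows cycle ears attached at a single vertex $q_i$, so the construction goes through unchanged --- only the label ``path-ear'' is inaccurate. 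With that adjusted, both directions, including the multigraph subtleties you flag (parallel edges as short ears, self-loops as length-one cycle ears, and choosing $P_0$ as a cycle of length at least two), are handled correctly.
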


According to Whitney~\cite{Whitney1932a}, every ear decomposition has exactly $m-n+1$ ears ($m := |E|$). For any $i$, let $G_i = (V_i,E_i) := P_0 \cup \cdots \cup P_i$ and $\overline{E_i} := E-E_i$. We denote the subgraph of $G$ that is induced by $\overline{E_i}$ as $\overline{G_i} = (\overline{V_i},\overline{E_i})$. Clearly, $\overline{G_j} \subset \overline{G_i}$ for every $i < j$. We note that this definition of $\overline{G_i}$ differs from the definition $\overline{G_i} := G-V_i$ that was used for (2,1)-vertex-orders~\cite{Schmidt2014}, due to the weaker edge-connectivity assumption.

For any ear $P_i$, let $inner(P_i) := V(P_i)-G_{i-1}$ be the set of \emph{inner vertices} of $P_i$ (for $P_0$, every vertex is an inner vertex). Hence, for a cycle $P_i \neq P_0$, $inner(P_i) = V(P_i)-q_i$. Every vertex of $G$ is an inner vertex of exactly one long ear, which implies that, in an ear decomposition, the inner vertex sets of the long ears partition $V$.

\begin{definition}
Let $D = (P_0,P_1,\ldots,P_{m-n})$ be an ear decomposition of $G$. For an edge $e$, let $birth_D(e)$ be the index $i$ such that $P_i$ contains $e$. 
For a vertex $v$, let $birth_D(v)$ be the index $i$ such that $P_i$ contains $v$ as inner vertex and let $last_D(v)$ be the maximal index $birth(vw)$ over all neighbors $w$ of $v$. Whenever $D$ is clear from the context, we will omit the subscript $D$.
\end{definition}

Thus, $P_{last(v)}$ is the last ear that contains $v$ and, seen from another perspective, the first ear $P_i$ such that $\overline{G_i}$ does not contain $v$. Clearly, a vertex $v$ is contained in $\overline{G_i}$ if and only if $last(v) > i$.

\section{The (1,1)-edge-order}\label{sec:11edgeorder}
Although (1,1)-edge-orders can be seen as edge-counterparts of $st$-numberings, they do not seem to be well-known. Let two edges be \emph{neighbors} if they share a common vertex. Annexstein et al.\ gave essentially the following definition.

\begin{definition}[{\cite{Annexstein1996}}]
Let $G=(V,E)$ be a graph with an edge $st$ that is not a self-loop. A \emph{(1,1)-edge-order} \emph{through} $st$ of $G$ is a total order $<$ on the edge set $E - st$ such that $m \geq 2$,
\begin{compactitem}
  \item[--] every edge $e$ that is not incident to $s$ has a neighbor $e'$ with $e' < e$ and
  \item[--] every edge $e$ that is not incident to $t$ has a neighbor $e'$ with $e < e'$.
\end{compactitem}
\end{definition}

Clearly, if $G$ has a (1,1)-edge-order through $st$, $G$ is 2-edge-connected, as neither $st$ nor any other edge can be a bridge of $G$ (note that this requires $m \geq 2$).

The converse statement was shown in~\cite{Annexstein1996} using a special type of ear decompositions based on breadth-first-search (however, without giving details of the linear-time algorithm). Here, we aim for a simple and direct (unlike, e.g., reducing to (1,1)-orders via line-graphs) exposition of the underlying idea and show that \emph{any} ear decomposition can be transformed to a (1,1)-edge-order in linear time.

For convenience, we use the \emph{incremental list order-maintenance problem}, which maintains a total order subject to the operations of (i) \emph{inserting} an element after a given element and (ii) \emph{comparing} two distinct given elements by returning the one that is smaller in the order. Bender et al.~\cite{Bender2002} show a simple solution for an even more general problem with amortized constant time per operation; we will call this the \emph{order data structure}.

\begin{lemma}\label{lem:11edgeorder}
Let $G$ be a 2-edge-connected graph with an edge $st$ that is not a self-loop. Then a (1,1)-edge-order through $st$ can be computed in time $O(m)$.
\end{lemma}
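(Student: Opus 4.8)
The plan is to compute an arbitrary ear decomposition $D = (P_0, P_1, \ldots, P_{m-n})$ of $G$, which exists by Theorem~\ref{thm:robbins}, and then to process the ears one at a time while maintaining the growing $(1,1)$-edge-order in the order data structure of Bender et al.~\cite{Bender2002}. The guiding idea is that an ear decomposition already encodes a ``build-up'' of $G$ starting from a cycle, and each new ear attaches to the previously constructed graph at its two endpoints (or at one endpoint, if it is a cycle ear); an $st$-numbering on the \emph{edges} should respect this build-up so that each edge gets a smaller neighbor ``towards $s$'' and a larger neighbor ``towards $t$''. Concretely, I would arrange that within the current order there are two distinguished positions corresponding to $s$ and $t$, and every ear is inserted as a contiguous block so that its two endpoints on the old graph dictate where the block goes.

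\textbf{Base case.} First I would handle $P_0$, the initial cycle, which must contain the edge $st$ (if not, one can reroot the ear decomposition, or equivalently choose the ear decomposition so that $P_0$ passes through $st$; since $st$ is not a self-loop and $G$ is $2$-edge-connected, $st$ lies on some cycle, and standard ear-decomposition constructions let us start from any prescribed initial cycle). Removing $st$ from the cycle $P_0$ leaves a single path from $s$ to $t$, and I would order the edges of this path in their natural sequence from $s$ to $t$. This base order trivially satisfies both conditions: every edge on the path has a path-neighbor on the $s$-side that is smaller (except the edge at $s$, which need not) and one on the $t$-side that is larger (except the edge at $t$).

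\textbf{Inductive step.} For each subsequent ear $P_i$, $i \geq 1$, I would insert its edges into the current order. If $P_i$ is a long path with endpoints $u, w$ already present in $G_{i-1}$, then its inner edges form a path; I insert them as a consecutive block between the positions of $u$ and $w$ in the current order, orienting the block so it runs from the endpoint that is ``closer to $s$'' to the one ``closer to $t$''. The key local invariant to maintain is: whenever we insert a block corresponding to an ear, the smallest edge of the block acquires a smaller neighbor (an already-placed edge incident to the $s$-side endpoint) and the largest edge acquires a larger neighbor (an already-placed edge incident to the $t$-side endpoint), while the interior edges of the block get their smaller/larger neighbors from within the block. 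Short ears (single edges) are the easy case: an edge $uw$ is inserted directly between the current positions of two incident already-placed edges. Cycle ears (attaching at a single vertex $q_i$) are handled by treating the cycle as a path from $q_i$ back to $q_i$ and inserting the block on one side of $q_i$'s current position.

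\textbf{The main obstacle.} The crux is choosing, for each ear, which endpoint plays the role of the ``$s$-side'' and which the ``$t$-side'', so that the two required neighbor conditions hold \emph{globally} for every edge, not just for the freshly inserted block. The clean way to make this precise is to use the $birth$ values: by definition of the order data structure we can decide, for the two endpoints $u,w$ of $P_i$, which of their already-placed incident edges is currently smaller, and insert the new block accordingly. I would argue by induction that the invariant ``every non-$s$-incident edge has a smaller neighbor and every non-$t$-incident edge has a larger neighbor, among edges already placed'' is preserved by each insertion, so that at the end every edge of $E - st$ satisfies both conditions. Once correctness is established, the running time is immediate: computing an ear decomposition takes $O(m)$, each edge is inserted exactly once into the order data structure at amortized constant cost, and deciding the orientation of each block costs a constant number of comparisons, for $O(m)$ total.
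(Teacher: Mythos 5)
Your proposal is correct and follows essentially the same route as the paper's proof: an ear decomposition with $st \in P_0$, the base path ordered from $s$ to $t$, and each subsequent ear inserted as a contiguous block into the order data structure of Bender et al., positioned and oriented by comparing already-placed incident edges of the ear's endpoints. The paper merely pins down the detail you leave slightly informal: each vertex $x$ stores $min(x)$ (the smaller of its two incident edges in $P_{birth(x)}$), and the block is inserted immediately after $\min\{min(x),min(y)\}$, starting at that endpoint, so that the other endpoint's $min$-edge (or, for a cycle ear, the other incident edge of $q_i$) automatically lies after the block and serves as the larger neighbor of the block's last edge.
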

\begin{proof}
We compute an ear decomposition $D$ of $G$ such that $st \in P_0$. This can be done in linear time by any text-book-algorithm; see~\cite{Schmidt2013a} for a simple one. Let $<_0$ be the total order that orders the edges in $P_0-st$ consecutively from $s$ to $t$. Clearly, $<_0$ is a (1,1)-edge-order through $st$ of the 2-edge-connected graph $G_0$. We extend $<_{i-1}$ iteratively to a (1,1)-edge-order $<_i$ of $G_i$ by adding the next ear $P_i$ of $D$; then $<_{m-n}$ gives the claim.

The order itself is stored in the \emph{order data structure}. For every vertex $x$ in $G_{i-1}$, let $min(x)$ be the smaller of its two incident edges in $P_{birth(x)}$ with respect to $<_{i-1}$ (for later arguments, define $max(x)$ analogously as the larger such edge); clearly, $min(x)$ and $max(x)$ can be computed in constant time while adding $P_j$. When adding the ear $P_i$ with (not necessarily distinct endpoints) $x$ and $y$, let $e$ be the smallest edge in $\{min(x),min(y)\}$ with respect to $<_{i-1}$ (this needs amortized constant time by using at most one comparison of the data structure). Consider all edges of $P_i$ in consecutive order starting with a neighbor of $e$. We obtain $<_i$ from $<_{i-1}$ by inserting these edges as one consecutive block immediately after the edge $e$; this takes amortized time proportional to the length of $P_i$. Then the first edge of $P_i$ has a smaller neighbor in $<_i$ while the last has a larger neighbor in $<_i$ (for cycles $P_i \neq P_0$, this exploits that $q_i$ has another incident edge in $G_{i-1}$), which implies that $<_i$ is a (1,1)-edge-order.
\end{proof}

This (special) (1,1)-edge-order will allow for a very easy computation of two edge-independent spanning trees in Section~\ref{sec:spanningtrees} and serve as a building block for the computation of three such trees. If one wants to keep the root-paths in two edge-independent spanning trees short, a different (1,1)-edge-order~\cite{Annexstein1996} may be computed by maintaining $min(x)$ as the incident edge of $x$ that is minimal in $G_i$ in the above algorithm (this can be done efficiently by updating $min(x)$ whenever an ear with endpoint $x$ is added). However, the latter order cannot be used for three edge-independent spanning trees.

\section{The (2,1)-edge-order}
We define (2,1)-orders as special ear decompositions.

\begin{definition}\label{def:edge-order}
Let $G$ be a graph with distinct edges $rt$ and $ru$ ($t=u$ is possible). A \emph{(2,1)-edge-order} through $rt$ and avoiding $ru$ (see Figure~\ref{fig:ExampleOrder2}) is an ear decomposition $D$ of $G$ such that
\begin{compactenum}
	\item[1.] $rt \in P_0$,
	\item[2.] $P_{m-n} = ru$, and  $\hfill\triangleright$ i.e., the last ear is the short ear $ru$
	\item[3.] for every $0 \leq i < m-n$, $\overline{G_i}$ contains $inner(P_i)$ and, if $P_i$ is short, at least one endpoint of $P_i$.
\end{compactenum}
\end{definition}

\begin{figure}[h!t]
\centering
\includegraphics[scale=1.0]{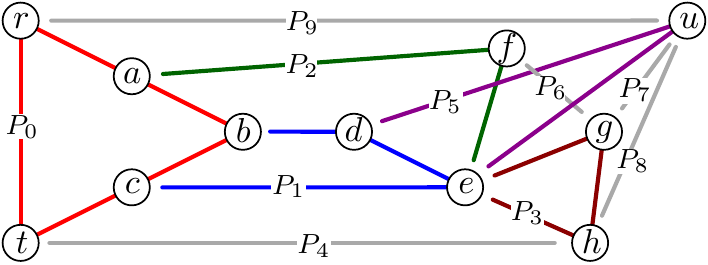}
\caption{A (2,1)-edge order of a $3$-edge connected graph.}\label{fig:ExampleOrder2}
\end{figure}

Definition~\ref{def:edge-order}.2 implies that $\overline{G_i}$ contains the vertices $r$ and $u$ for every $0 \leq i < m-n$. We call Definition~\ref{def:edge-order}.3 the \emph{non-separateness} of $D$. The non-separateness of $D$ states that every inner vertex of a long ear $P_i$ has an incident edge in $G$ that is in $\overline{G_i}$, and that every short ear $P_i$ (seen as edge) has a neighbor in $\overline{G_i}$. The name refers to the following helpful property.

\begin{lemma}\label{lem:non-separating}
Let $D$ be a (2,1)-edge-order. Then, for every $0 \leq i < m-n$, $\overline{G_i}$ is connected.
\end{lemma}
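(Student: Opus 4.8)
The plan is to prove that $\overline{G_i}$ is connected by reverse induction on $i$, starting from the largest index and peeling ears off one at a time. The base case is straightforward: by Definition~\ref{def:edge-order}.2 the last ear $P_{m-n}$ is the short ear $ru$, so $\overline{G_{m-n-1}}$ consists of exactly the single edge $ru$ together with its endpoints $r$ and $u$, which is trivially connected. The inductive step is where all the work lies: assuming $\overline{G_i}$ is connected for some $1 \le i \le m-n-1$, I want to show that $\overline{G_{i-1}}$ is connected. Since $\overline{G_{i-1}} = \overline{G_i} \cup P_i$ (the edges of $P_i$ being exactly those in $\overline{E_{i-1}}$ but not in $\overline{E_i}$), it suffices to show that attaching $P_i$ to the already-connected $\overline{G_i}$ cannot disconnect anything and, conversely, that every vertex and edge newly present in $\overline{G_{i-1}}$ is hooked onto $\overline{G_i}$.

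The key step is to exploit the non-separateness condition (Definition~\ref{def:edge-order}.3) precisely at index $i$. I would split into the two cases that the condition distinguishes. If $P_i$ is a long ear, then all of its inner vertices lie on a single path (or cycle) $P_i$, so they form a connected piece among themselves; what I must check is that this piece touches $\overline{G_i}$. The non-separateness says each inner vertex of $P_i$ has an incident edge lying in $\overline{G_i}$, so in fact every inner vertex of $P_i$ already belongs to $\overline{V_i}$ and is incident to an edge of $\overline{G_i}$; combined with the connectivity of $\overline{G_i}$ and the internal connectivity of $P_i$, the union $\overline{G_i} \cup P_i$ is connected. If instead $P_i$ is a short ear, it is a single edge $e$ with endpoints $x$ and $y$; here non-separateness guarantees that $e$ has a neighbor in $\overline{G_i}$, i.e. at least one endpoint of $e$ is a vertex of $\overline{G_i}$ incident to some edge of $\overline{G_i}$. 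Then adding the single edge $e$ to the connected graph $\overline{G_i}$ keeps it connected (the other endpoint, if new, is attached via $e$).

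The main obstacle I anticipate is being careful about what ``$\overline{G_i}$ contains'' an inner vertex means in terms of the precise definition $\overline{G_i} = (\overline{V_i}, \overline{E_i})$ with $\overline{E_i} = E - E_i$ and $\overline{V_i}$ the vertex set induced by $\overline{E_i}$. A vertex lies in $\overline{V_i}$ exactly when it is incident to some edge of $\overline{E_i}$, equivalently (by the remark after the definition of $last$) when $last(v) > i$. So I need to verify that the inner vertices of $P_i$ genuinely survive into $\overline{G_i}$ rather than disappearing, which is exactly what Definition~\ref{def:edge-order}.3 asserts via ``$\overline{G_i}$ contains $inner(P_i)$.'' The one subtlety worth stating explicitly is the endpoint behavior of $P_i$: its endpoints lie in $G_{i-1}$ and need not be in $\overline{G_i}$, but this causes no trouble because in the long-ear case the inner vertices already bridge to $\overline{G_i}$ on their own, and in the short-ear case the guaranteed neighboring edge in $\overline{G_i}$ provides the attachment. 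Once these cases are dispatched, the reverse induction closes and the claim follows for all $0 \le i < m-n$.
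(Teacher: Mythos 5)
Your proof is correct, and it takes a genuinely different route from the paper's. The paper fixes $i$ and works entirely inside $\overline{G_i}$: for any edge $e$ there, non-separateness applied to the ear containing $e$ yields a neighboring edge with strictly larger ear index; iterating gives a chain of neighboring edges with strictly increasing indices, which must terminate at the unique final edge $ru$, so every edge of $\overline{G_i}$ is joined to $r$ within $\overline{G_i}$ (together with the remark that an edge-induced graph has no isolated vertices). You instead induct downward on $i$ across the complement graphs, writing $\overline{G_{i-1}} = \overline{G_i} \cup P_i$ and observing that Definition~\ref{def:edge-order}.3 at index $i$ forces $P_i$ to share a vertex with the connected graph $\overline{G_i}$ (every inner vertex in the long case, one endpoint in the short case), so that a union of two connected graphs meeting in a vertex remains connected. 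Both arguments consume non-separateness in exactly the same way, once per ear, but the inductions run in opposite directions: yours peels ears back onto the complement starting from $ru$ and proves all $m-n$ statements in one sweep with no explicit path-chasing, while the paper's is self-contained for each fixed $i$ and makes explicit that every vertex of $\overline{G_i}$ reaches $r$, which is the form in which the lemma is invoked in the proof of Lemma~\ref{lem:3connected}. Your handling of the subtleties --- that ``contains'' means incidence with an edge of $\overline{E_i}$, and that the endpoints of $P_i$ need not lie in $\overline{G_i}$ --- is precisely what makes your inductive step sound.
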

\begin{proof}
Consider any $i < m-n$ and let $e$ be any edge in $\overline{G_i}$. By Definition~\ref{def:edge-order}.2, $r \in \overline{G_i}$. We show that $\overline{G_i}$ contains a path from one of the endpoints of $e$ to $r$. This gives the claim, as $\overline{G_i}$ is an edge-induced graph and therefore does not contain isolated vertices.

Let $P_j$ be the unique ear that contains $e$. If $P_j$ is short, $P_j = e$ and $e$ has a neighbor in $\overline{G_j}$ due to the non-separateness of $D$. If $P_j$ is long, at least one endpoint of $e$ must be an inner vertex of $P_j$ and $e$ has a neighbor in $\overline{G_j}$ for the same reason. Hence, in both cases we find a neighbor that is contained in an ear $P_k$ with $k > j$. By applying induction on the indices of these ears, we find a path that starts with an endpoint of $e$ and ends with the only edge left in $\overline{G_{m-n-1}}$, namely $ru$.
\end{proof}

Next, we show that the existence of a (2,1)-edge-order proves the graph to be $3$-edge-connected.

\begin{lemma}\label{lem:3connected}
If $G$ has a (2,1)-edge-order, $G$ is 3-edge-connected.
\end{lemma}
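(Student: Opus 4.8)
The plan is to argue by contradiction: assuming $G$ has an edge-cut of size at most $2$, I will produce a violation of either the $2$-edge-connectivity of a prefix graph $G_i$ or of the non-separateness in Definition~\ref{def:edge-order}.3. First note that $G$ is automatically $2$-edge-connected, since $D$ is an ear decomposition and Theorem~\ref{thm:robbins} applies; this already rules out cuts of size $1$ (bridges). Moreover, for every $i$ the prefix $(P_0,\dots,P_i)$ is itself an ear decomposition of $G_i$, so each $G_i$ is $2$-edge-connected as well — this is the first of two workhorses. The second is Lemma~\ref{lem:non-separating}: every $\overline{G_i}$ with $i<m-n$ is connected.

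So suppose for contradiction that there are edges $e\neq f$ with $G-e-f$ disconnected, and fix a partition of $V$ into nonempty vertex sets $A$ and $B$ that have no edge between them in $G-e-f$. Since $G$ is $2$-edge-connected, neither $e$ nor $f$ is a bridge, which forces both of them to cross between $A$ and $B$; as only two edges were removed, $e$ and $f$ are then exactly the edges of $G$ crossing the cut. The first reduction is to show $birth(e)=birth(f)$. Indeed, if say $birth(e)<birth(f)$, consider $i:=birth(e)$: then $G_i$ contains $e$ but not $f$, so $e$ is the unique edge of $G_i$ crossing the cut, while both $A\cap V_i$ and $B\cap V_i$ are nonempty (they contain the two endpoints of $e$). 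Hence $e$ is a bridge of $G_i$, contradicting the $2$-edge-connectivity of $G_i$. Therefore both cut edges are born in a common ear $P_j$; note that $P_j$ is long (it contains two distinct edges) and that $j<m-n$, since the last ear $P_{m-n}=ru$ is a single edge.

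It remains to rule out that a single ear $P_j$ carries both cut edges. Since $G_{j-1}$ is connected and contains no crossing edge (both crossings lie in $P_j$), all of $V_{j-1}$ lies on one side; assume without loss of generality $V_{j-1}\subseteq A$, so in particular $r\in A$ because $rt\in P_0\subseteq G_{j-1}$. The $B$-endpoint of the crossing edge $e$ is not in $V_{j-1}$, hence is an inner vertex of $P_j$, giving $inner(P_j)\cap B\neq\emptyset$. On the other hand, $\overline{G_j}$ is connected (as $j<m-n$) and, again, crossing-free, so it lies entirely on one side. By non-separateness, $inner(P_j)\subseteq\overline{G_j}$. If $\overline{G_j}\subseteq A$, it cannot contain the inner vertex just found in $B$; if $\overline{G_j}\subseteq B$, it cannot contain $r\in A$, even though $r\in\overline{G_j}$ since $j<m-n$. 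Either way we reach a contradiction. The degenerate case $j=0$ is handled directly: every vertex of the cycle $P_0$ is an inner vertex, and $P_0$ has vertices on both sides (two crossings split a cycle into two nonempty arcs), yet non-separateness forces $inner(P_0)=V(P_0)\subseteq\overline{G_0}$, which is crossing-free and hence one-sided — a contradiction.

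The conceptual heart of the argument, and the step I expect to be the main obstacle to pin down, is the interplay of the two opposite-facing tools: $2$-edge-connectivity of the prefixes $G_i$ forces the two cut edges into a single ear, while connectivity of the suffixes $\overline{G_i}$ together with non-separateness forbids any single ear from carrying both. The routine but error-prone part is the cut bookkeeping — verifying that exactly $e$ and $f$ cross between $A$ and $B$, that the relevant prefix and suffix graphs are genuinely crossing-free, and that the symmetry between $A$ and $B$ is broken only by the fixed location of $r$ (via $rt\in P_0$), which is precisely what legitimizes the ``without loss of generality'' and simultaneously supplies the contradiction in the case $\overline{G_j}\subseteq B$.
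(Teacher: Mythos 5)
Your proof is correct, but it takes a genuinely different route from the paper. The paper argues directly and locally: for each vertex $v$ with $i=birth(v)$, it extracts two edge-disjoint $v$--$r$ paths inside $G_i$ (via Theorem~\ref{thm:robbins}, since $G_i$ has an ear decomposition) and a third $v$--$r$ path inside the connected graph $\overline{G_i}$ (via Lemma~\ref{lem:non-separating} and Definitions~\ref{def:edge-order}.2+3), then invokes transitivity of edge-connectivity; this implicitly passes through Menger-type equivalences between edge-disjoint paths and cuts. You instead work with the cut definition of $3$-edge-connectivity head-on: assuming a $2$-edge-cut $\{e,f\}$, you use $2$-edge-connectivity of the prefixes $G_i$ to force $birth(e)=birth(f)$ (otherwise the earlier-born edge would be a bridge of some $G_i$), and then use connectivity of $\overline{G_j}$ plus non-separateness to show no single ear $P_j$ can carry both cut edges, since $\overline{G_j}$ would have to contain both an inner vertex of $P_j$ on the $B$-side and the root $r$ on the $A$-side while being crossing-free, with a clean separate treatment of $j=0$. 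The two proofs lean on the same two pillars (prefixes are $2$-edge-connected because they have ear decompositions; suffixes are connected by Lemma~\ref{lem:non-separating}), but yours is the contrapositive/cut-based version: it is longer and needs the bookkeeping and case split you flag, yet it is more self-contained relative to the paper's cut-based definition of $k$-edge-connectivity, avoiding the appeal to transitivity of edge-connectivity and the paths-versus-cuts translation that the paper's one-paragraph proof quietly assumes. All steps you give check out: the claim that both cut edges cross the bipartition, the bridge argument in $G_{birth(e)}$, the fact that $P_j$ is long and hence $j<m-n$ (so Lemma~\ref{lem:non-separating} and Definition~\ref{def:edge-order}.3 apply), and the final two-sided contradiction are all sound.
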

\begin{proof}
Let $D$ be a (2,1)-edge-order through $rt$ and avoiding $ru$. Consider any vertex $v$ of $G$. By transitivity of edge-connectivity, it suffices to show that $G$ contains three edge-disjoint paths between $v$ and $r$. Let $P_i$ be the ear that contains $v$ as inner vertex. In particular $i < m-n$, as $P_i$ is long. Then $G_i$ has an ear decomposition and, due to Theorem~\ref{thm:robbins}, contains two edge-disjoint paths between $v$ and $r$. By Definitions~\ref{def:edge-order}.2+3, $\overline{G_i}$ contains $v$ and $r$. According to Lemma~\ref{lem:non-separating}, $\overline{G_i}$ is connected. Thus, $\overline{G_i}$ contains a third path between $v$ and $r$, which is edge-disjoint from the first two, as $G_i$ and $\overline{G_i}$ are edge-disjoint.
\end{proof}

Let $G$ have a (2,1)-edge-order. Then Lemma~\ref{lem:3connected} implies $\delta(G) \geq 3$. This in turn gives that, for every vertex $v$, $P_{last(v)}$ is not the first ear that contains $v$, which implies that $P_{last(v)}$ must have $v$ as endpoint. In particular, if $vw$ is an edge and $last(v)=last(w)=birth(vw)$, $P_{birth(vw)}$ is the short ear $vw$ and, according to the non-separateness of $D$, we have $i = m-n$, which implies $vw = ru$.

\begin{lemma}\label{lem:endpoint}
For any vertex $v$, $P_{last(v)}$ has $v$ as an endpoint. For any edge $vw$ satisfying $last(v)=last(w)=birth(vw)$, $vw = ru$.
\end{lemma}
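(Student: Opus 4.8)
The plan is to prove the two claims in sequence, leaning on the minimum-degree bound that $3$-edge-connectivity supplies. For the first claim, I would first invoke Lemma~\ref{lem:3connected} to obtain $\delta(G) \geq 3$. The key local observation is that if $v$ is an inner vertex of its birth ear $P_{birth(v)}$, then $v$ has exactly two incident edges inside $P_{birth(v)}$: this holds uniformly whether $P_{birth(v)}$ is $P_0$, a long path, or a (long or short) cycle, since an inner vertex always sits in the interior of the ear and meets precisely two of its edges. Combined with $\deg_G(v) \geq 3$, this forces $v$ to have an incident edge outside $P_{birth(v)}$, hence $last(v) > birth(v)$. Since $P_{birth(v)}$ is the \emph{unique} ear having $v$ as an inner vertex, the ear $P_{last(v)} \neq P_{birth(v)}$ cannot contain $v$ as an inner vertex; as every vertex of an ear is either inner or an endpoint, $v$ must be an endpoint of $P_{last(v)}$.

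For the second claim, I would set $j := birth(vw) = last(v) = last(w)$ and apply the first claim to both $v$ and $w$: each of them is an endpoint of $P_{last(v)} = P_{last(w)} = P_j$, and moreover $vw \in P_j$. The next step is to argue that $P_j$ must be the short ear $vw$. Indeed, if $P_j$ were a long path, its only endpoints would be its two ends, which are separated by at least one inner vertex, so its edge set could not contain the edge $vw$ directly joining these two ends; if $P_j$ were a long cycle, it would have only the single endpoint $q_j$, forcing $v = w = q_j$ and making $vw$ a self-loop, which cannot lie on a cycle of length $\geq 2$. In either case we reach a contradiction, so $P_j$ is short and, containing $vw$, equals $vw$.

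Finally, I would rule out $j < m-n$ using non-separateness. If $j < m-n$, then Definition~\ref{def:edge-order}.3 requires $\overline{G_j}$ to contain at least one endpoint of the short ear $P_j = vw$, i.e.\ $v \in \overline{G_j}$ or $w \in \overline{G_j}$. But a vertex $x$ lies in $\overline{G_j}$ exactly when $last(x) > j$, and here $last(v) = last(w) = j$, so neither $v$ nor $w$ is in $\overline{G_j}$, a contradiction. Hence $j = m-n$, and Definition~\ref{def:edge-order}.2 identifies $P_{m-n}$ as $ru$, giving $vw = ru$.

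I expect the main obstacle to be the middle step, namely the clean case analysis showing that $P_j$ is short. The delicate points are that the graph may contain parallel edges and self-loops, so I must be careful that ``both endpoints of $P_j$ are $v$ and $w$ together with $vw \in P_j$'' genuinely excludes every long ear shape; the degree-two-in-the-birth-ear fact underpinning the first claim also needs to be phrased so that it holds across all ear types, including $P_0$ and cycle ears.
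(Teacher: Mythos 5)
Your proof is correct and follows essentially the same route as the paper's: Lemma~\ref{lem:3connected} gives $\delta(G) \geq 3$, hence $last(v) > birth(v)$ and $v$ must be an endpoint of $P_{last(v)}$; then $P_{birth(vw)}$ is forced to be the short ear $vw$, and non-separateness rules out $birth(vw) < m-n$, so Definition~\ref{def:edge-order}.2 gives $vw = ru$. The only difference is that you spell out explicitly the case analysis (long path vs.\ long cycle, parallel edges and self-loops) showing that $P_{birth(vw)}$ is short, which the paper leaves implicit.
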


The converse of Lemma~\ref{lem:3connected} is also true: If $G$ is 3-edge-connected, $G$ has a (2,1)-edge-order. This gives a full characterization of $3$-edge-connected graphs; however, proving the latter direction is more involved than Lemma~\ref{lem:3connected}. In the next section, we will prove the stronger statement that such a (2,1)-edge-order does not only exist but can actually be computed efficiently.

\section{Computing a (2,1)-edge-order}\label{sec:computing}
At the heart of our algorithm is the following classical construction of $3$-edge-connected graphs due to Mader.

\begin{definition}\label{def:maderoperation}
The following operations on graphs are called \emph{Mader-operations} (see Figure~\ref{fig:Mader}).
\begin{compactenum}[(a)]
	\item \emph{vertex-vertex-addition}: Add an edge between the not necessarily distinct vertices $v$ and $w$ (possibly a parallel edge or, if $v=w$, a self-loop).
	\item \emph{edge-vertex-addition}: Subdivide an edge $ab$ with a vertex $v$ and add the edge $vw$ for a vertex $w$.
	\item \emph{edge-edge-addition}: Subdivide two distinct edges $ab$ and $cd$ with vertices $v$ and $w$, respectively, and add the edge $vw$.
\end{compactenum}
\end{definition}

\begin{figure}[h!t]
	\centering
	\subfloat[vertex-vertex-addition: $v=w$ is allowed.]{
	\makebox[3.6cm]{
		\includegraphics[page=2,scale=0.6]{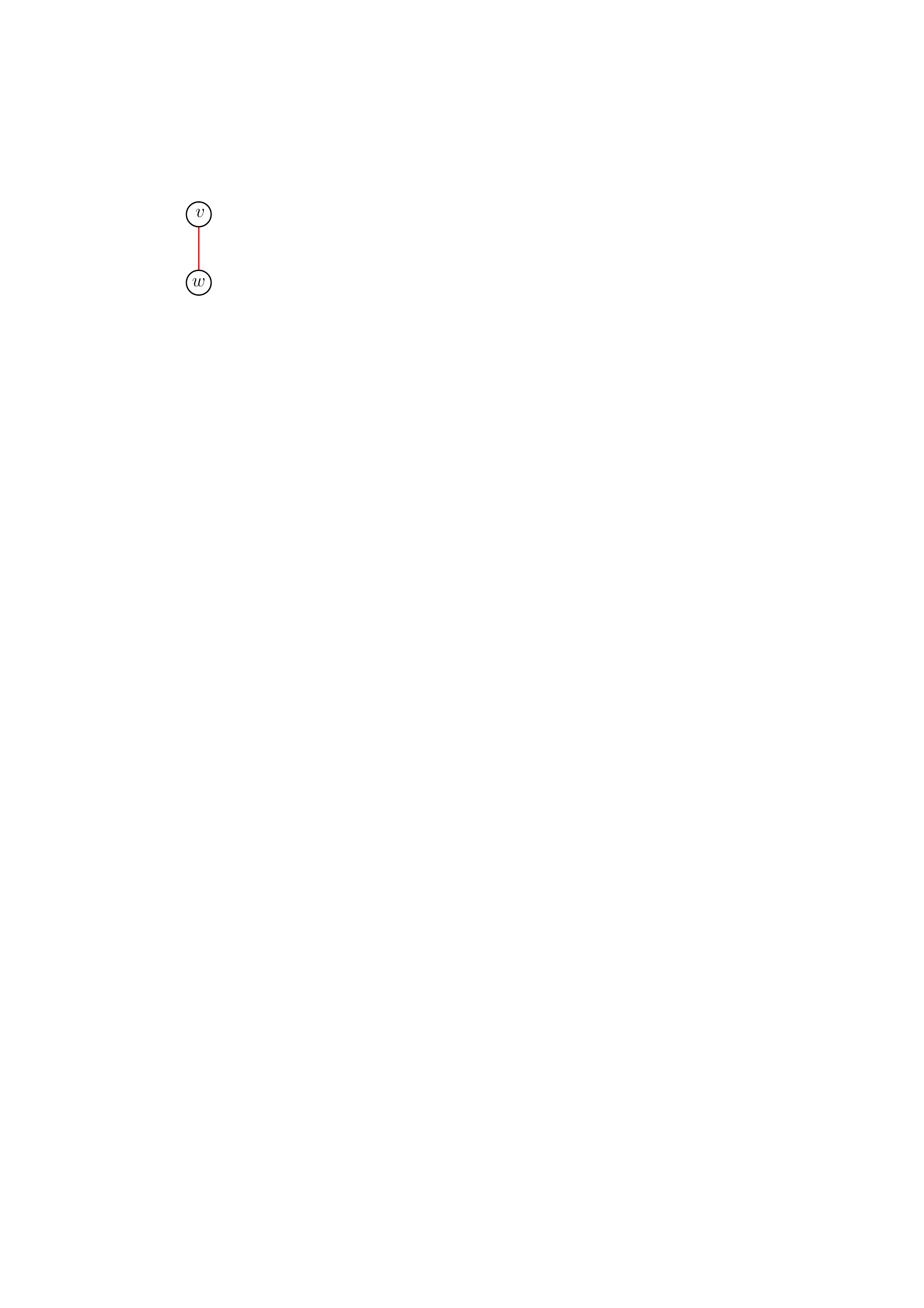}
		\label{fig:Mader1}
	}}
	\hspace{0.7cm}
	\subfloat[edge-vertex-addition: $w \in \{a,b\}$ is allowed.]{
		\includegraphics[page=1,scale=0.6]{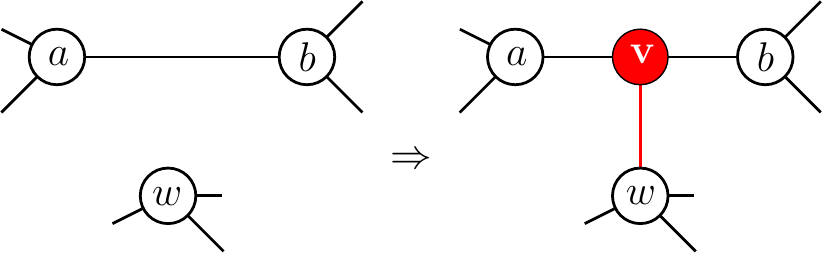}
		\label{fig:Mader2}
	}
	\hspace{0.7cm}
	\subfloat[edge-edge-addition: $a,b \in \{c,d\}$ is allowed.]{
		\includegraphics[page=1,scale=0.6]{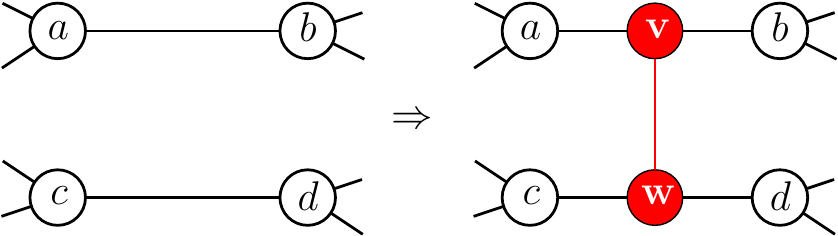}
		\label{fig:Mader3}
	}
	\caption{Mader-operations }
	\label{fig:Mader}
\end{figure}

The edge $vw$ is called the \emph{added edge} of the Mader-operation. Let $K_2^3$ be the graph that consists of exactly two vertices and three parallel edges.

\begin{theorem}[\cite{Mader1978}]\label{thm:mader}
A graph $G$ is $3$-edge-connected if and only if $G$ can be constructed from $K_2^3$ using Mader-operations.
\end{theorem}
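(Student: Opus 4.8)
The plan is to prove the two implications separately. \textbf{Soundness} (every graph built from $K_2^3$ by Mader-operations is $3$-edge-connected) is a routine cut analysis, whereas the converse is the substantial part and is proved by induction via \emph{reverse} Mader-operations. For soundness, first note that $K_2^3$ is $3$-edge-connected, since its two vertices are joined by three parallel edges and so the only edge-cut has size $3$. It thus suffices to show that each Mader-operation turns a $3$-edge-connected $G$ into a $3$-edge-connected $G'$. Operation (a) only adds an edge, which cannot decrease the size of any edge-cut, so this case is immediate. For (b) and (c) I would argue by contradiction: given an edge-cut $(X,\bar X)$ of $G'$ of size at most $2$, contract each subdivided path $a$-$v$-$b$ (and $c$-$w$-$d$) back to its original edge and discard the added edge $vw$. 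Since each new vertex has degree exactly $3$, no side of the cut can consist solely of new vertices (such a side would already have cut-size $3$ or $4$), so both sides still meet $V(G)$, and the resulting partition of $V(G)$ is an edge-cut of $G$ of size at most that of $(X,\bar X)$, i.e.\ at most $2$ --- contradicting $3$-edge-connectivity of $G$.

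For the converse I would prove by induction on $|E(G)|$ that every $3$-edge-connected $G$ is constructible from $K_2^3$, the base case being $G=K_2^3$. The inductive step reduces to the following key lemma: every $3$-edge-connected $G\neq K_2^3$ admits a \emph{reverse} Mader-operation, i.e.\ can be transformed into a strictly smaller $3$-edge-connected graph $G'$ from which $G$ arises by one Mader-operation. The reverse operations are (a$'$) deleting an edge $e$ with $G-e$ still $3$-edge-connected, and (b$'$)/(c$'$) picking a vertex $v$ of degree $3$ with incident edges $va,vb,vw$, deleting $vw$, and suppressing $v$ by merging $va,vb$ into a single edge $ab$ --- additionally suppressing $w$ in the same way when $\deg(w)=3$, which is exactly case (c$'$) and is forced there because otherwise $w$ would have degree $2$ in $G'$.

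Writing $\lambda(G)$ for the minimum size of an edge-cut, the key lemma splits into three cases. If $\lambda(G)\geq 4$, deleting any edge leaves edge-connectivity at least $3$, so (a$'$) applies. If $\lambda(G)=3$ and some edge $e$ lies in no $3$-edge-cut, then every edge-cut through $e$ has size at least $4$, whence $G-e$ is still $3$-edge-connected and (a$'$) applies. The only remaining case is $\lambda(G)=3$ with every edge lying in a $3$-edge-cut; then no edge is deletable, and the task becomes to find a degree-$3$ vertex $v$ together with a suppression at $v$ that keeps the graph $3$-edge-connected.

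This last case is the main obstacle. First, a degree-$3$ vertex must exist: I would take a $3$-edge-cut $(S,\bar S)$ with $|S|$ minimum and use that, since every edge lies in a $3$-cut, the minimum-side set $S$ is ``atomic'' and must contain a vertex of degree $3$ (for $|S|=1$ this is immediate). Second, and harder, even with such a vertex $v$ with edges $va,vb,vw$ in hand, a careless suppression can create a new $2$-edge-cut in $G'$ --- for instance when $a$ and $b$ lie on opposite sides of a tight $3$-cut through $v$. One must therefore show that some suppression at a suitably chosen degree-$3$ vertex is safe. I would control this through the crossing structure of the $3$-edge-cuts through $v$: a failed suppression corresponds to a $3$-cut separating the two edges being merged, and since minimum edge-cuts form an essentially laminar (non-crossing) family, a crossing and counting argument over the $3$-cuts incident to the chosen vertex rules out that all three suppressions fail simultaneously. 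Turning this into a watertight argument --- in particular excluding a global configuration in which every candidate vertex is ``blocked'' --- is the delicate step I expect to demand the most care, and is precisely the content that Mader's original proof supplies.
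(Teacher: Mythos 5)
The paper never proves this statement: it is Mader's classical construction theorem, imported verbatim by citation to \cite{Mader1978}, so there is no in-paper argument to compare yours against. Judged on its own merits, your attempt contains a genuine gap. The soundness direction is essentially correct: your contraction argument (map a cut of $G'$ of size at most $2$ back to a cut of $G$ by restricting to $V(G)$, after checking that no side consists only of new vertices) works, modulo routine care with self-loops and parallel edges, which the paper's multigraph setting allows. The converse, however, is the entire substance of the theorem, and you do not prove it. Your induction correctly reduces it to a key lemma --- every $3$-edge-connected $G\neq K_2^3$ admits a reverse Mader-operation yielding a smaller $3$-edge-connected graph --- and your first two cases ($\lambda(G)\geq 4$, or $\lambda(G)=3$ with some edge in no $3$-cut) are fine. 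But in the remaining case, where every edge lies in a $3$-edge-cut, you establish at most the existence of a degree-$3$ vertex (and even that invokes the non-crossing property of minimum cuts of odd size, which you would need to state and prove), while the decisive step --- that \emph{some} suppression at \emph{some} degree-$3$ vertex preserves $3$-edge-connectivity --- is only gestured at.

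The gesture is not close to a proof. A counting argument at a single fixed degree-$3$ vertex $v$ does not suffice: a priori each of the three possible suppressions at $v$ can be blocked by a different tight $3$-cut, so nothing contradictory follows from "not all three fail"; worse, one must exclude the global configuration in which \emph{every} degree-$3$ vertex is blocked, which requires choosing the vertex carefully (for instance, innermost with respect to the laminar family of $3$-cuts) and an uncrossing argument you never carry out. You acknowledge this yourself when you write that this step "is precisely the content that Mader's original proof supplies" --- which is an admission that the theorem has not been proven. Since the hard implication is exactly what the statement asserts beyond a routine exercise, the proposal as it stands establishes only the easy half; for the paper's purposes the correct move is what the authors do, namely cite Mader, and any self-contained replacement would have to fill in this lemma in full.
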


According to Theorem~\ref{thm:mader}, applying Mader-operations on $3$-edge-connected graphs preserves $3$-edge-connectivity. We will call a sequence of Mader-operations that constructs a $3$-edge-connected graph a \emph{Mader-sequence}. It has been shown that a Mader-sequence can be computed efficiently. 

\begin{theorem}[{\cite[Thm.~4]{Mehlhorn2015}}]\label{thm:madersequence}
A Mader-sequence of a $3$-edge-connected graph can be computed in time $O(n+m)$.
\end{theorem}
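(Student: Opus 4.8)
The plan is to prove this stronger, algorithmic statement by working backwards: since reversing a sequence of inverse Mader-operations that reduces $G$ to $K_2^3$ yields exactly a Mader-sequence, it suffices to compute such a reduction in time $O(n+m)$. The single inverse operation we repeatedly apply is to delete one edge $vw$ and then \emph{smooth} (suppress) each endpoint that thereby drops to degree $2$; deleting with no smoothing inverts a vertex-vertex-addition, while smoothing one resp.\ both endpoints inverts an edge-vertex-addition resp.\ an edge-edge-addition. By Theorem~\ref{thm:mader} such an edge --- call it \emph{removable}, meaning the graph stays $3$-edge-connected after the deletion and smoothing --- exists in every $3$-edge-connected graph different from $K_2^3$. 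Since each inverse operation deletes at least one edge, at most $m$ of them are performed, so the whole difficulty lies in finding each removable edge and updating the graph in amortized constant time, as a naive re-test of $3$-edge-connectivity after every step would be far too expensive.

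First I would precompute, once and in linear time, a guiding structure that both certifies the $3$-edge-connectivity of $G$ and prescribes a safe deletion order. Since a $3$-edge-connected graph is in particular $2$-edge-connected, an ear decomposition (or Schmidt's chain decomposition) is available in time $O(m)$; the key is to refine it so that processing its edges in a fixed, precomputed order removes a removable edge at every step. The invariant to maintain is that the current graph is $3$-edge-connected, and the entire point of fixing the order in advance is to make this invariant provable \emph{once}, structurally, rather than re-verified per step: an edge $vw$ is removable if and only if it lies in no $3$-edge-cut of the current graph that tightens to a $2$-edge-cut upon deletion, and the guiding structure must be designed so that the next edge in the order always meets this criterion.

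The current graph itself is kept in adjacency lists with degree counters, so that a deletion, a smoothing, and the detection of a newly created degree-$2$ vertex are constant-time local updates; candidate edges are held in a worklist and each edge is touched only $O(1)$ times amortized. Combined with the bound of at most $m$ inverse operations, this yields the claimed linear running time. Along the way one must ensure that a smoothing never produces a forbidden configuration for the following step and that the recursion bottoms out cleanly at $K_2^3$.

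The main obstacle is the structural correctness lemma underpinning the second paragraph: proving that a single precomputed order of edge-deletions preserves $3$-edge-connectivity at \emph{every} intermediate graph, so that no per-step connectivity check is needed. This is exactly where the gap between the easily computed ear/chain structure and the genuinely different Mader-operation structure has to be bridged, and it is the part that is easy to get wrong --- in particular the edge-edge-addition case, whose inverse smooths two vertices at once and thereby couples two distinct parts of the decomposition, and the bookkeeping for parallel edges and self-loops created during the reduction. I expect that closing this gap rigorously, together with the amortized accounting for the smoothings, is what forces the proof to be long.
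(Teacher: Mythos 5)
There is a genuine gap, and you name it yourself: the ``structural correctness lemma'' in your second paragraph is never proved, and it is not a deferrable technicality --- it is the entire content of the theorem. Note first that the present paper does not prove this statement at all; it imports it by citation from \cite{Mehlhorn2015}, and the introduction even remarks that establishing it ``took a 28-page paper''. Your top-level strategy (reverse a sequence of inverse Mader-operations reducing $G$ to $K_2^3$, with at most $m$ steps, so the difficulty is finding each removable edge cheaply) is indeed the right frame and matches the cited work, but everything hard is concentrated in the unproven claim that an ear or chain decomposition can be ``refined'' into a fixed deletion order whose every prefix preserves $3$-edge-connectivity. A generic ear decomposition cannot serve this purpose: its intermediate graphs $G_i$ are only $2$-edge-connected, so deleting ears in reverse order destroys $3$-edge-connectivity immediately, and Theorem~\ref{thm:mader} only guarantees that \emph{some} removable edge exists at each step, not that a precomputed order will always point to one. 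Finding a removable edge on-line, after each deletion, would require exactly the per-step connectivity tests you set out to avoid.

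The cited proof bridges this gap with machinery your sketch does not contain: a DFS-based chain decomposition whose chains are classified by how they attach to the already-processed part, a specific and nontrivial order of adding chains (and segments of chains) so that every intermediate graph of the \emph{forward} construction is a subdivision of a $3$-edge-connected graph, and a correspondence showing that adding one chain in this order amounts to one Mader-operation on the smoothed graph --- including the coupling you correctly worry about in the edge-edge case, where a single added edge subdivides edges belonging to two different chains. Proving that such an order exists, is computable in linear time, and yields valid Mader-operations is the substance of \cite{Mehlhorn2015}. Your proposal restates this requirement (``the guiding structure must be designed so that the next edge in the order always meets this criterion'') rather than establishing it, so as a proof it is incomplete.
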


Our algorithm for computing a (2,1)-edge-order works as follows. Assume we want a (2,1)-edge-order of $G$ through $r\overline{t}$ and avoiding $r\overline{u}$. We first compute a suitable Mader-sequence of $G$ using Theorem~\ref{thm:madersequence} and start with a (2,1)-edge-order of its first graph $K_2^3$. This can be easily found (see Figure~\ref{fig:K2_3}). The crucial part of the algorithm is then to iteratively modify the given (2,1)-edge-order to a (2,1)-edge-order of the next graph in the sequence efficiently.

\begin{figure}[h!t]
	\centering
	\includegraphics[scale=0.9]{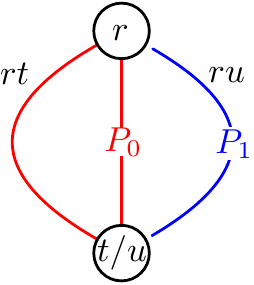}
	\caption{A (2,1)-edge-order of $K_2^3$ through $rt$ and avoiding $ru$.}
	\label{fig:K2_3}
\end{figure}

There are several technical difficulties to master. First, the edges $r\overline{t}$ and $r\overline{u}$ may change during the Mader-sequence, as they may be subdivided by Mader-operations. We therefore use a special Mader-sequence to harness the dynamics of the vertices $r$, $\overline{t}$ and $\overline{u}$. We choose a DFS-tree of $G$ with root $r$ and fix the edges $r\overline{t}$ and $r\overline{u}$. This way the initial $K_2^3$ will contain $r$ as one of its two vertices and, by the construction of~\cite[p.~6]{Mehlhorn2015}, $r$ will never be relabeled. Although $\overline{t}$ and $\overline{u}$ may not be present in this initial $K_2^3$, the bijection between the graphs $H$ of the Mader-sequence and $H$-subdivisions that are contained in $G$ as subgraphs~\cite[Thm.+Cor.~1]{Mehlhorn2015} (we refer to~\cite[Section~2.3]{Schmidt2010} for details of this bijection) gives us good replacement vertices $t$ and $u$ in $K_2^3$ for $\overline{t}$ and $\overline{u}$: If we, for every subdivision the Mader-sequence does on $rt$ or $ru$, respectively, label the subdividing vertex with $t$ or $u$ (the old $t$ or $u$ is then relabeled), the vertices labeled $t$ and $u$ at the end of the Mader-sequence will be $\overline{t}$ and $\overline{u}$. Thus, we can assume that the final (2,1)-edge-order is indeed through $r\overline{t}$ and avoids $r\overline{u}$, as desired. We refer to~\cite[Section~4]{Schmidt2010} for details on
how to efficiently compute such a labeling scheme.

Now consider a graph $G$ of the above Mader-sequence for which we know a (2,1)-edge-order $D$ and let $G'$ be the next graph in that sequence. Then $G'$ is only one Mader-operation away and we aim for an efficient modification of $D$ into a (2,1)-edge-order $D'$ of $G'$. We will prove that there is always a modification that is local in the sense that the only ears that are modified are ``near'' the added edge of the Mader-operation.

\begin{lemma}\label{lem:PathReplacement}
Let $D=(P_0,P_1,\ldots,P_{m-n})$ be a (2,1)-edge-order of a $3$-edge-connected graph $G$ through $rt$ and avoiding $ru$. Let $G'$ be obtained from $G$ by applying one Mader-operation $\Gamma$ and let $rt'$ and $ru'$ be the edges of $G'$ that correspond to $rt$ and $ru$ in $G$ (as discussed above). Then a (2,1)-edge-order $D'$ of $G'$ through $rt'$ avoiding $ru'$ can be computed from $D$ using only constantly many amortized constant-time modifications.
\end{lemma}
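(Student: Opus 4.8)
The plan is to proceed by a case analysis on the three types of Mader-operations, in each case prescribing an explicit local modification of $D$ and then verifying that the three conditions of Definition~\ref{def:edge-order} are preserved. The guiding principle is that a Mader-operation only subdivides (at most two) existing edges and adds one new edge $vw$, so the natural move is to insert $vw$ into $D$ as a \emph{new short ear} placed as late as possible in the ear sequence, and to patch the subdivided edges in place within the ears that already contained them. Since subdividing an edge $e$ in an ear $P_j$ simply replaces $e$ by a length-two path through the new vertex, the ear structure of $D$ is essentially untouched by the subdivisions themselves; the delicate point is where in the order the added edge $vw$ must go so that condition~3 (non-separateness) still holds for \emph{every} prefix, and so that conditions~1 and~2 remain valid.

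First I would handle the subdivisions. Subdividing $ab\in P_j$ with a new vertex $v$ turns $P_j$ into an ear of the same type (path stays a path, cycle stays a cycle) with $v$ as an additional inner vertex, and all indices $birth$ and $last$ of the untouched edges are preserved; the two half-edges $av$ and $vb$ both get $birth$-value $j$. I would check that this preserves condition~3 for $P_j$ and for the two edges/vertices involved, using that the non-separateness witness of the old edge $ab$ (its neighbor in $\overline{G_j}$, guaranteed since $D$ is a $(2,1)$-edge-order) still lies in $\overline{G_j}$ after the patch. The special bookkeeping for $rt$ and $ru$ is already arranged by the labeling scheme described before the lemma, so that $rt'$ and $ru'$ sit in $P_0$ and $P_{m-n}$ respectively, keeping conditions~1 and~2 formally intact.

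Next I would insert the added edge $vw$. The main work is to choose its position in the ear ordering. Placing $vw$ as a short ear just before the final ear $ru$ is the canonical attempt, but one must guarantee non-separateness: the new short ear $vw$ needs a neighbor in $\overline{G_i}$ at its insertion index $i$, and, crucially, inserting $vw$ must not destroy the non-separateness witnesses of the inner vertices of earlier long ears, since adding $vw$ shrinks $\overline{G_j}$ for all $j\ge i$. I would use Lemma~\ref{lem:endpoint} and the fact that in a $3$-edge-connected graph every vertex has degree at least $3$ to argue that after the Mader-operation each affected vertex still has a surviving incident edge in the relevant $\overline{G_j}$; the edge-vertex- and edge-edge-additions give $v$ (and $w$) extra incident edges from the subdivisions, which supply exactly the replacement witnesses needed. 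In the vertex-vertex-addition case, where no subdivision occurs, I expect to need a slightly later insertion index together with the observation that the endpoints $v,w$ already had degree $\ge 3$ in $G$.

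The step I expect to be the main obstacle is precisely this insertion-index choice in the presence of a shrinking $\overline{G_i}$: I must exhibit, for the single added edge, an index $i$ such that (a) both endpoints of $vw$ (as inner/endpoint vertices of their ears) together with $r$ remain in $\overline{G_i}$, and (b) no previously satisfied non-separateness condition is broken by removing $vw$ from $\overline{G_j}$ for $j\ge i$. Equivalently, I need $vw$ to not be the \emph{unique} edge witnessing non-separateness for any inner vertex of an earlier long ear, which is where the degree-$3$ lower bound and the $last$-values become essential (the remark before Lemma~\ref{lem:endpoint} about $last(v)$ forcing $v$ to be an endpoint is the tool for controlling when a vertex leaves $\overline{G_i}$). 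Bounding the number of ears and indices that have to be touched to a constant, and arguing each touch is an amortized constant-time order-maintenance operation via the \emph{order data structure}, then yields the stated running time; I would verify the constant bound by noting that only $P_j$ (one or two ears) and the newly inserted short ear change, with all other ears and their $birth$/$last$ values left exactly as in $D$.
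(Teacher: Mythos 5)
There is a genuine gap, and it lies exactly where you located the ``main obstacle,'' but your framework cannot overcome it. Your plan restricts all modifications to (i) subdividing edges \emph{in place} inside the ears that contain them and (ii) inserting the added edge $vw$ as a \emph{new short ear} at some well-chosen index. For an edge-vertex-addition this only works when $birth(ab) < last(w)$ (the paper's case 2.(a), where one inserts $vw$ right after $P_{last(w)-1}$). When $last(w) \leq birth(ab)$, and for \emph{every} edge-edge-addition with your in-place subdivision, no insertion index exists: the new vertex $v$ is an inner vertex of $P'_{birth(ab)}$ whose only incident edge outside that ear is $vw$, so non-separateness of $P'_{birth(ab)}$ forces the short ear $vw$ to come \emph{after} index $birth(ab)$; but then every neighbor of $vw$ (namely $av$, $vb$, and the edges at $w$, all born by $\max(birth(ab),last(w),birth(cd)) = birth(ab)$) already lies in the prefix, so the short ear $vw$ has no endpoint left in the remaining graph --- violating Definition~\ref{def:edge-order}.3 --- and it cannot be made the last ear since Definition~\ref{def:edge-order}.2 forces that ear to be $ru'$. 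This is not a matter of choosing the index more cleverly; the paper escapes it only by abandoning both of your constraints: the operations $leg$, $belly$ and $head$ \emph{split} the existing ear $P_{birth(ab)}$ into two consecutive ears and absorb $vw$ into a \emph{long} ear (e.g.\ $P_{birth(ab)}[,a]+av+vw$ followed by $vb+P_{birth(ab)}[b,]$), so that $v$'s witness becomes $vb$ and $vw$ never needs a witness of its own. These ear-splitting constructions, together with the extra case analysis when $ab$ or $cd$ equals $rt$ or $ru$, are the core of the proof and are absent from your proposal.

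A secondary point: your stated worry that inserting $vw$ ``shrinks $\overline{G_j}$ for all $j\ge i$'' and might destroy witnesses of earlier long ears is backwards. Since $\overline{G_j}$ is edge-induced and insertion only adds an edge, every old witness survives any insertion; the only things at risk are the new ear itself and the newly created inner vertices, which is precisely the obstruction above. Finally, on the complexity side, note that once ears are split (as they must be), doubly-linked lists and the order data structure alone no longer suffice to answer the $birth$/$last$ comparisons that decide which case applies; the paper needs the path-splitting data structure of Djidjev (find/split/sub/replace/add in amortized constant time) on top of the order data structure, a component your sketch does not anticipate.
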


Lemma~\ref{lem:PathReplacement} is our main technical contribution and we split its proof into the following three sections. First, we introduce the operations $leg$, $belly$ and $head$ in order to combine several cases that can be handled similarly for the different types of $\Gamma$. Second, we show how to modify $D$ to $D'$ and, third, we discuss computational issues. 

For all three sections, let $vw$ be the added edge of $\Gamma$ such that $v$ subdivides the edge $ab \in E(G)$ and $w$ subdivides $cd \in E(G)$ (if applicable). Thus, the vertex $t'$ in $G'$ is either $t$, $v$ or $w$, and the vertex $u'$ in $G'$ is either $u$, $v$ or $w$ (hence, $t'r$ and $ru'$ will never be self-loops). In all three sections, $birth$ and $last$ will always refer to $D$, unless stated otherwise.

Let $P_i \neq P_0$ be an ear with a given orientation and let $x$ be a vertex in $P_i$. If $P_i$ is a path, we define $P_i[,x]$ and $P_i[x,]$ as the \emph{maximal subpaths} of $P_i$ that end and start at $x$, respectively; if $P_i$ is a cycle, we take the same definition with the additional restriction that $P_i[,x]$ starts at $q_i$ and $P_i[x,]$ ends at $q_i$. Occasionally, the orientation of $P_i$ will not matter; if none is given, an arbitrary orientation can be taken. For paths $A$ and $B$, let $A+B$ be the \emph{concatenation} of $A$ and $B$.

\subsection{Legs, bellies and heads}
While the operations $leg$ and $belly$ are inspired by the ones in~\cite{Schmidt2014}, the operation $head$ is new. All three operations will show for some special cases how $D$ can be modified to a (2,1)-edge-order $D'$. A complete description for all cases (using these operations) will be given in the next section.

\paragraph{Legs.} Let $\Gamma$ be either an edge-vertex-addition such that $last(w) < birth(ab)$ or an edge-edge-addition such that $birth(cd) < birth(ab)$.

  If $P_{birth(ab)}$ is long, at least one of $a$ and $b$ is an inner vertex, say w.l.o.g.\ $b$.
Otherwise, $P_{birth(ab)}=ab$ is short and, as $D$ is non-separating, at least one of $a$ and $b$, say w.l.o.g.\ $b$, has an incident edge in $\overline{G_{birth(ab)}}$. In both cases, orient $P_{birth(ab)}$ from $a$ to $b$. The operation $leg$ constructs $D'$ from $D$ by replacing the ear $P_{birth(ab)}$ of $D$ by the two consecutive ears $P_{birth(ab)}[,a] + av+ vw$ and $vb + P_{birth(ab)}[b,]$ in that order and, if $\Gamma$ is an edge-edge-addition, additionally subdividing the edge $cd$ in $P_{birth(cd)}$ with $w$ (see Figure~\ref{fig:leg}). Note that this definition is well-defined also for cycles $P_{birth(ab)}$, including self-loops.
 
\begin{figure}[h!t]
	\centering
	\includegraphics[scale=0.7]{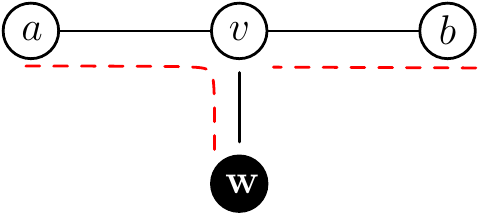}
	\caption{The result of operation $leg$ (dashed lines), black vertices are in $G_{birth(ab)-1}$.}
	\label{fig:leg}
\end{figure}

We prove that $D'$ is a (2,1)-edge-order through $rt'$ avoiding $ru'$. Assume first that $\Gamma$ is an edge-vertex-addition. Since $last(w) < birth(ab)$, we conclude that $w \notin P_{birth(ab)}$ ($w$ has no incident edge ``left'' in $\overline{G_{birth(ab)-1}}$). For the same reason, $birth(ab) > 0$. Hence, no matter whether $P_{birth(ab)}$ is a path or a cycle, $w$ and the one or two endpoints of $P_{birth(ab)}$ are contained in $G_{birth(ab)-1}$. Since $D'$ partitions $E(G')$, this implies that $D'$ is an ear decomposition. If $\Gamma$ is an edge-edge-addition, $birth(cd) \leq birth(ab)$ gives a very similar argument.

It remains to prove that $D'$ satisfies Properties~\ref{def:edge-order}.1--3. The first is true, as $rt \in P_0$ is only affected when $birth(cd) = 0$ and when $rt$ is subdivided by $w$; then $w = t'$ in $G'$ and $rt' \in P'_0$, as claimed. The second is true, as $cd \neq ru$ and, by assumption, $ab \neq ru$; hence, the last ear $ru$ does not change. For the non-separateness of $D'$, it suffices to consider the two modified ears $P_{birth(cd)}$ and $P_{birth(ab)}$, as all other ears still satisfy non-separateness. Since the only new inner vertex $w$ in $P'_{birth(cd)}$ is incident to the edge $wv \in \overline{G'_{birth(cd)}}$, $P'_{birth(cd)}$ is also non-separating. It remains to consider the two new ears $P'_{birth(ab)} = P_{birth(ab)}[,a]+av+ vw$ and $P'_{birth(ab)+1} = vb + P_{birth(ab)}[b,]$. All inner vertices of these ears except for the new vertex $v$ inherit their non-separateness directly from $P_{birth(ab)}$. Since $v$ is incident to $vb$, the long ear $P'_{birth(ab)}$ is non-separating and, if $P'_{birth(ab)+1}$ is long, $P'_{birth(ab)+1}$ is non-separating as well. If otherwise $P'_{birth(ab)+1}=vb$ is short, $P_{birth(ab)}$ cannot be long due to our assumed orientation. Hence, $P_{birth(ab)}=ab$ and the assumed orientation implies that $b$ has an incident edge in $\overline{G_{birth(ab)}}$, which gives that $P'_{birth(ab)+1}$ is non-separating as well.

\paragraph{Bellies.}
Let $\Gamma$ be either an edge-vertex-addition such that $last(w) = birth(ab)$ and $w\notin \{a,b\}$ or an edge-edge-addition such that $birth(cd) = birth(ab)$
(note that $c,d\in\{a,b\}$ is allowed.) 
Consider the shortest path in $P_{birth(ab)}$ from an endpoint to one of the vertices $\{a,b\}$, say w.l.o.g.\ $b$, such that $w$ is contained in this path. We orient $P_{birth(ab)}$ from $a$ to $b$. $P_{birth(ab)}$ is a long ear with $b$ as inner vertex. If $\Gamma$ is an edge-edge-addition, one of the vertices $\{c,d\}$, say w.l.o.g.\ $c$, is contained in $P_{birth(ab)}[,w]$.

If $birth(ab)>0$, the operation $belly$ constructs $D'$ from $D$ by replacing the ear $P_{birth(ab)}$ of $D$ by the two consecutive ears $P_{birth(ab)}[,a]+av+ vw + P_{birth(ab)}[w,]$ and $vb+ P_{birth(ab)}[b,w]$ in that order (if edge-vertex-addition) and by the two consecutive ears $P_{birth(ab)}[,a]+av+ vw + wd+ P_{birth(ab)}[d,]$ and 
$vb+ P_{birth(ab)}[b,c]+cw$ (if edge-edge-addition), see Figure~\ref{fig:belly}. Note that this definition is well-defined also if $P_{birth(ab)}$ is a cycle.
If $birth(ab)=0$, the vertices $v$ and $w$ cut $P_0$ in two distinct paths $P_{0,1}$ and $P_{0,2}$ having endpoints $v$ and $w$.
Let $P_{0,1}$ be the path containing $r$. Then the operation $belly$ constructs $D'$ from $D$ by replacing the ear $P_{birth(ab)}$ of $D$ by the two consecutive ears $P_{0,1}+ vw$ 
and $P_{0,2}$ in this order. If $rt\in \{ab, cd\}$, then either $v=t'$ or $w=t'$, respectively.

\begin{figure}[h!t]
	\centering
	{
	\makebox[2cm]{
		\includegraphics[scale=0.7]{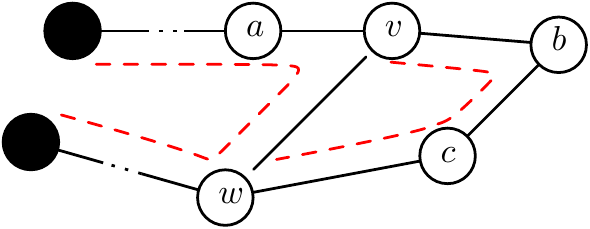}
	}}
	\hspace{4cm}
	{\makebox[2cm]{
		\includegraphics[scale=0.7]{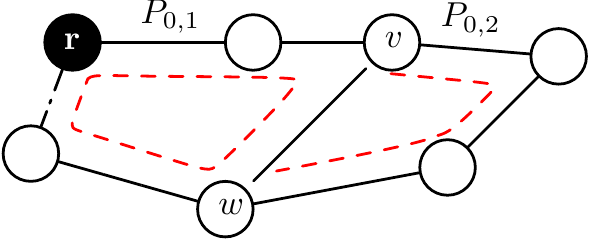}
	}}
	\caption{The result of the operation $belly$ (dashed lines). }
	\label{fig:belly}
\end{figure}

We prove that $D'$ is a (2,1)-edge-order through $rt'$ avoiding $ru'$.
No matter whether $P_{birth(ab)}$ is a path or a cycle, the one or two endpoints of it are contained in $G_{birth(ab)-1}$ and $D'$ partitions $E(G')$, so clearly $D'$ is an ear decomposition.

It remains to prove that $D'$ satisfies Properties~\ref{def:edge-order}.1--3. The first is true, as $rt \in P_0$ is only affected when $birth(ab) = 0$. Then,  if $rt$ is subdivided by $v$ or $w$,  $v=t'$ or $w = t'$ in $G'$, and $rt' \in P'_0$, as claimed. 
The second is true, as $ ru\notin \{ab,cd\}$ ($P_{birth(ab)}\neq \{ru\}$ as it is a long ear and $birth(cd)=birth(ab)$); hence, the last ear $ru$ does not change. 
For the non-separateness of $D'$, it again suffices to consider the modified ear $P_{birth(ab)}$. First, assume $birth(ab)>0$.
Consider the two new ears $P'_{birth(ab)} = P_{birth(ab)}[,a]+av+vw+ P_{birth(ab)}[w,]$ (respectively, $P'_{birth(ab)} = P_{birth(ab)}[,a]+av+vw+ wd +P_{birth(ab)}[d,]$ if edge-edge-addition) and $P'_{birth(ab)+1} = vb+ P_{birth(ab)}[b,w]$ (respectively, $P'_{birth(ab)+1} = vb+P_{birth(ab)}[b,c]+cw$ if edge-vertex-addition).
All inner vertices of these ears except for the new vertex $v$ (and $w$, if edge-edge-addition) inherit their non-separateness directly from $P_{birth(ab)}$.
Since $v$ is incident to $vb$ (and $w$ is incident to $wc$, if edge-edge-addition), the long ear $P'_{birth(ab)}$ is non-separating and $P'_{birth(ab)+1}$, which is long as it contains $b$ as inner vertex,  is non-separating as well.
If $birth(ab)=0$, very similar arguments show the non-separateness of the new ears. 

\paragraph{Heads.} 
Let $\Gamma$ be an edge-vertex-addition such that $w\in\{a,b\}$, say w.l.o.g.\ $w=a$, and $last(a) = birth(ab)$ and, if $ab=ru$, then $r\neq a$.
Then $a$ is an endpoint of $P_{birth(ab)}$ ($P_{birth(ab)}$ cannot be a self-loop, as $last(a)=birth(ab)$).
We orient $P_{birth(ab)}$ from $a$ to $b$. 
The operation $head$ constructs $D'$ from $D$ by replacing the ear $P_{birth(ab)}$ of $D$ by the two consecutive ears $av+va$ and $vb+P_{birth(ab)}[b,]$ in that order (see Figure~\ref{fig:head}). Note that this definition is well-defined also for cycles $P_{birth(ab)}$.

\begin{figure}[ht]
	\centering
	\makebox[2cm]{
		\includegraphics[scale=0.7]{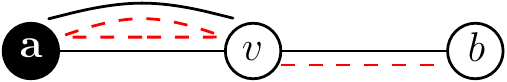}
	}
	\caption{The dashed lines show the result of the operation $head$.}
	\label{fig:head}
\end{figure}

We prove that $D'$ is a (2,1)-edge-order through $rt'$ avoiding $ru'$. 
Clearly, $D'$ is an ear decomposition.
 Property~\ref{def:edge-order}.1 is true, as $birth(ab)=last(a)>0$ and, hence, the first ear does not change. 
 Property~\ref{def:edge-order}.2 is true, as the last ear is only affected when $birth(ab) = ru$ and $r\neq a$; then $v=u'$ in $G'$ and 
the last ear in $D'$ is $ru'$, as claimed. 
For the non-separateness of $D'$, we consider the two new ears $P'_{birth(ab)} =av+va$ and $P'_{birth(ab)+1} =vb+P_{birth(ab)}[b,]$. $P'_{birth(ab)}$ is a long ear with $v$ as only inner vertex. Since $v$ is incident to $vb$, $P'_{birth(ab)}$ is non-separating.
All inner vertices of $P'_{birth(ab)+1}$ inherit their non-separateness directly from $P_{birth(ab)}$ and so, if $P'_{birth(ab)+1}$ is long, $P'_{birth(ab)+1}$ is non-separating as well. 
If otherwise $P'_{birth(ab)+1}=vb$ is short, then either $last(b)>last(a)$ and so $b$ has an incident edge in $\overline{G_{birth(ab)}}$, which gives that $P'_{birth(ab)+1}$ is non-separating as well.
If $last(b)=last(a)$ then $ab=ru$ (Lemma~\ref{lem:endpoint}) and the ear $P'_{birth(ab)+1}$ is the last ear of $D'$ and does not have to satisfy the non-separateness.

\subsection{Modifying D to D'}\label{sec:modify}
We will now show how to obtain a (2,1)-edge-order $D'$ through $rt'$ avoiding $ru'$ from $D$. By symmetry, assume w.l.o.g.\ that $birth(ab) \geq birth(cd)$. Note that applying the operations $belly$, $leg$ and $head$ preserves all properties of a $(2,1)$-edge-order. 
Recall that, for every subdivision the Mader-sequences does on $rt$ or $ru$, respectively, the subdividing vertex is $t'$ or $u'$, as explained after Figure~\ref{fig:K2_3}.
We have the following case distinctions:

\medskip
\begin{compactenum}
\item[\bf{1. $\mathbf{\Gamma}$ is a vertex-vertex-addition}] (see Figure~\ref{fig:Mader1})
\begin{compactenum}
	\item $vw$ is a self-loop at $v$ ($v=w$): Obtain $D'$ from $D$ by adding the new short ear $vv$ directly after the ear $P_{last(v)-1}$. This ensures that the new ear is non-separating. 
	\item $v\neq w$ and $vw\neq \{rt,ru\}$: If $last(v)\leq last(w)$,
$D'$ is obtained from $D$ by adding the new short ear $vw$ directly after the ear $P_{last(w)-1}$, ensuring that the new ear is non-separating. If $last(v)> last(w)$, the new short ear $vw$ is added directly after the ear $P_{last(v)-1}$.
		\item $vw=rt$ (the added edge is a parallel edge): the Mader-sequence gives us the information whether $rt$ is $rt'$ or the new added edge is $rt'$. 
	If $rt =rt'$ then add the new edge immediately after the ear $P_{last(t)-1}$. Otherwise obtain $D'$ from $D$ by replacing  $rt$ with $rt'$ in $P_0$ and adding the old edge $rt$ as an short ear immediately after the ear $P_{last(t)-1}$. 
		\item $vw=ru$ (the added edge is a parallel edge): 
the Mader-sequence gives us the information whether $ru$ is $ru'$ or the new added edge is $ru'$. 		
Depending on this information, obtain $D'$ from $D$ by either adding the  new edge directly before or directly after the last ear of $D$.  
\end{compactenum}
\item[\bf{2. $\mathbf{\Gamma}$ is an edge-vertex-addition}] (see Figure~\ref{fig:Mader2})
\begin{compactenum}
	\item $birth(ab)<last(w)$: Obtain $D'$ from $D$ by adding the new short ear $vw$ directly after the ear $P_{last(w)-1}$ and subdivide the ear $P_{birth(ab)}$ with $v$.
	This operation is also well-defined when $P_{birth(ab)}$ is a cycle or self-loop.
	Also, the new ear is non-separating and, since $v$ is incident to $w$, the ear $P_{birth(ab)}$ remains non-separating.
\item $last(w) < birth(ab)$ and $ab \neq ru$: Apply $leg$
	\item  $birth(ab)=last(w)$ and $w \notin \{a,b\}$: Apply $belly$.
\item $birth(ab)=last(w)$ and $w\in\{a,b\}$; if $ab=ru$, then $r\neq w$: Apply $head$. 
	\item $ab=ru$ and if $birth(ab)=last(w)$ and $w\in\{a,b\}$ then $r=w$:  Let $x\in\{a,b\}$ but $x\neq r$. Obtain $D'$ from $D$ by replacing the ear $ru$ by the two consecutive ears $wv+vx$ and $rv$. 
\end{compactenum}
\item[\bf{3. $\mathbf{\Gamma}$ is an edge-edge-addition}] (see Figure~\ref{fig:Mader3})
\begin{compactenum}
	\item $birth(ab)=birth(cd)$: Apply $belly$. 
	\item $birth(ab) > birth(cd)$ and $ab \neq ru$: Apply $leg$.
	\item $ab=ru$: Let w.l.o.g.\ $r=a$. Obtain $D'$ from $D$ by replacing the last ear of $D$ by the two consecutive ears $bv+vw$ and $rv$ in this order.
\end{compactenum}
\end{compactenum}

\medskip
In all cases, $D'$ is clearly an ear decomposition. 
Properties~\ref{def:edge-order}.1--3 are satisfied due to the given case distinction and the mentioned properties. Hence, $D'$ is a $(2,1)$-edge-order through $rt'$ avoiding $ru'$.

\shortversion{There are several subtleties in sorting out the computational complexity of this approach, mostly raised by the question how fast we can compute one of the above cases in which we are in. For a concise proof of the linear runtime, we refer to the appendix.}
\longversion{
\subsection{Computational complexity}
For proving Lemma~\ref{lem:PathReplacement}, it remains to show that each of the constantly many modifications above can be computed in constant amortized time. Note that ears may become arbitrarily long in the process and therefore may contain up to $\Theta(n)$ vertices. Moreover, we have to maintain the birth- and last-values in order to compute which subcase of the last section applies. Thus, we cannot use the standard approach of storing the ears of $D$ explicitly by using doubly-linked lists, as then the birth-values of linearly many vertices may change for every modification.

Instead, we will represent the ears as sets in a data structure for \emph{set splitting}, which maintains disjoint sets online under an intermixed sequence of find and split operations. Gabow and Tarjan~\cite{Gabow1985} discovered the first such data structure with linear space and constant amortized time per operation. Their and our model of computation is the standard unit-cost word-RAM. Imai and Asano~\cite{Imai1987} enhanced this data structure to an \emph{incremental variant}, which additionally supports adding single elements to certain sets in constant amortized time. In both results, all sets are restricted to be intervals of some total order. To represent the (2,1)-edge-order $D$ in the path replacement process, we will use the following more general data structure due to Djidjev~\cite[Section~3.2]{Djidjev2006}, which is not limited to total orders and still supports the add-operation.

\smallskip
The data structure maintains a collection $P$ of edge-disjoint paths under the following operations:

\begin{compactitem}
	\item[\texttt{new\_path(x,y)}:] Creates a new path that consists of the edge $xy$. The edge $xy$ must not be in any other path of $P$.
	\item[\texttt{find(e)}:] Returns the integer-label of the path containing the edge $e$.
	\item[\texttt{split(xy)}:] Splits the path containing the edge $xy$ into the two subpaths from $x$ to one endpoint and from $x$ to the other endpoint of that path.
	\item[\texttt{sub(x,e)}:] Modifies the path containing $e$ by subdividing $e$ with vertex $x$.
	\item[\texttt{replace(x,y,e)}:] Neither $x$ nor $y$ may be an endpoint of the path $Z$ containing $e$. Cuts $Z$ into the subpath from $x$ to $y$ and into the path that consists of the two remaining subpaths of $Z$ joined by the new edge $xy$.
	\item[\texttt{add(x,yz)}:] The vertex $y$ must be an endpoint of the path $Z$ containing the edge $yz$ and $x$ is either a new vertex or not in $Z$. Adds the new edge $xy$ to $Z$.
\end{compactitem}
\bigskip

Note that all ears are not only edge-disjoint but also internally disjoint. Djidjev proved that each of the above operations can be computed in constant amortized time~\cite[Theorem~1]{Djidjev2006}. We will only represent long ears in the data structure; the remaining short ears can be simply maintained as edges. As the data structure can only store paths, we store every cycle $P_i$ as the union of two paths in $P_i$ of which one is an edge with endpoint $q_i$ (for $P_0$, with endpoint $r$). For all paths of length at least two, including all long paths $P_i$, we store its two endpoints at its \texttt{find()}-label. Thus, the endpoints of all ears can be be accessed and updated in constant time.

This way, we store the ears of the initial (2,1)-edge-order of $K_2^3$ in constant total time. Every modification of Section~\ref{sec:modify} can then be realized with a constant number of operations of the data structure, and hence in amortized constant time.

Additionally, we need to maintain the order of the ears in $D$. Lemma~\ref{lem:PathReplacement} moves and inserts in every step only a constant number of ears to specified locations of $D$. Hence, we can maintain the order of ears in $D$ by applying the \emph{order data structure} (as defined for (1,1)-edge-oders) to the \texttt{find()}-labels of ears; this costs amortized constant time per step.

So far we could have maintained the order of ears also by using doubly-linked lists. However, for deciding which of the subcases in Section~\ref{sec:modify} applies, we additionally need to compare birth- and last-values of the vertices and edges involved in $\Gamma$. In fact, it suffices to support the queries ``$birth(x) < birth(y)$'' and ``$birth(x) = birth(y)$'', where $x$ and $y$ may be edges or vertices, and analogous queries on the $last$-values of vertices.
If $x$ and $y$ are edges, both birth-queries can be computed in constant amortized time by comparing the labels \texttt{find(x)} and \texttt{find(y)} in the order data structure. In order to allow birth-queries on vertices, we will store pointers at every vertex $x$ to the two edges $e_1$ and $e_2$ that are incident to $x$ in $P_{birth(x)}$. The desired query involving $birth(x)$ can then be computed by comparing \texttt{find(e$_1$)} in the order data structure.

For any new vertex $x$ that is added to $D$, we can find $e_1$ and $e_2$ in constant time, as these are in $\{av,vb,cw,wd,vw\}$. Since $P_{birth(x)}$ may change over time, we have to update $e_1$ and $e_2$. The only situation in which $P_{birth(x)}$ may loose $e_1$ or $e_2$ (but not both) is a \texttt{split} or \texttt{replace} operation on $P_{birth(x)}$ at $x$ (the split operation must be followed by an add operation on $x$, as $x$ is always inner vertex of some ear). This cuts $P_{birth(x)}$ into two paths, each of which contains exactly one edge in $\{e_1,e_2\}$. Checking \texttt{find(e$_1$)$=$find(e$_2$)} recognizes this case efficiently. Dependent on the particular case, we compute a new consistent pair $\{e'_1,e'_2\}$ that differs from $\{e_1,e_2\}$ in exactly one edge.
Finally, the value $last(x)$ for a vertex $x$ can be maintained the same way as $birth(x)$ with the only difference that it links to (one edge of) the last ear containing $x$ instead of the first such ear. This allows to check the desired comparisons in amortized constant time.

We conclude that $D'$ can be computed from $D$ in amortized constant time. This proves Lemma~\ref{lem:PathReplacement} and implies the following theorem.
}

\begin{theorem}\label{thm:main}
Given edges $tr$ and $ru$ of a $3$-edge-connected graph $G$, a (2,1)-edge-order $D$ of $G$ through $tr$ and avoiding $ru$ can be computed in time $O(m)$.
\end{theorem}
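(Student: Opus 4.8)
The plan is to prove the theorem by induction along a Mader-sequence of $G$, using Lemma~\ref{lem:PathReplacement} as the single inductive step and accumulating the amortized-constant-time modifications. Write the given edges as $r\overline{t}$ and $r\overline{u}$. First I would fix an appropriate starting point: by Theorem~\ref{thm:madersequence} a Mader-sequence of $G$ can be computed in time $O(n+m)$, and since $G$ is $3$-edge-connected we have $\delta(G)\geq 3$, hence $2m\geq 3n$ and $n=O(m)$, so this is $O(m)$. The subtle point is to compute not merely \emph{some} sequence but one compatible with the prescribed edges. Following the discussion after Figure~\ref{fig:K2_3}, I would choose a DFS-tree rooted at $r$ and run the construction of Mehlhorn et al., which guarantees that $r$ is one of the two vertices of the initial $K_2^3$ and is never relabeled. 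Simultaneously I maintain labels $t$ and $u$, relabeling whenever the sequence subdivides the current edge $rt$ or $ru$; by the bijection between the graphs of the sequence and subdivisions contained in $G$, the vertices carrying the labels $t$ and $u$ at the end are exactly $\overline{t}$ and $\overline{u}$. Thus the final order is through $r\overline{t}$ and avoids $r\overline{u}$, as required.

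Next I would set up the induction. The base case is the initial $K_2^3$, for which a $(2,1)$-edge-order through $rt$ avoiding $ru$ is exhibited directly (Figure~\ref{fig:K2_3}) and stored in the data structures of the preceding subsection in constant time. For the inductive step, assume a $(2,1)$-edge-order $D$ of the current graph through its current edges $rt$ and $ru$ is given. The next graph $G'$ is obtained by exactly one Mader-operation $\Gamma$, so Lemma~\ref{lem:PathReplacement} yields a $(2,1)$-edge-order $D'$ of $G'$ through the updated edges $rt'$ and $ru'$ using only constantly many amortized-constant-time modifications. Iterating this along the whole sequence produces a $(2,1)$-edge-order of $G$ through $r\overline{t}$ avoiding $r\overline{u}$.

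For the runtime, I would bound the number of iterations. Each Mader-operation adds at least one edge (vertex-vertex-addition adds exactly one, edge-vertex-addition two, edge-edge-addition three), so starting from the $3$ edges of $K_2^3$ and ending at $m$ edges, the sequence has at most $m-3=O(m)$ operations. By Lemma~\ref{lem:PathReplacement} each iteration costs amortized constant time, so all modifications together cost $O(m)$; adding the $O(m)$ cost of computing and labeling the sequence gives the claimed $O(m)$ total.

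The main obstacle is entirely concentrated in Lemma~\ref{lem:PathReplacement}, which I am permitted to assume: it supplies both the \emph{existence} of a local modification restoring all three properties of a $(2,1)$-edge-order after an arbitrary Mader-operation (via the case analysis using $leg$, $belly$ and $head$) and the fact that this modification is realizable in amortized constant time despite ears growing to length $\Theta(n)$ (via the path/set-splitting data structure together with the order data structure maintaining $birth$- and $last$-values). Granting that lemma, the only remaining work for the theorem is the bookkeeping above: guaranteeing that the DFS-based Mader-sequence keeps $r$ fixed and tracks $t$ and $u$ correctly through subdivisions, and verifying that the number of operations is linear. I expect the correct tracking of $t$ and $u$ to be the one place where care is needed, since the target edges are themselves subdivided during the process, but this is resolved by the relabeling scheme inherited from the underlying construction.
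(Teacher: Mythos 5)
Your proposal is correct and follows essentially the same route as the paper: compute the special DFS-based Mader-sequence (with $r$ fixed and the labels $t,u$ tracked through subdivisions), start from the explicit $(2,1)$-edge-order of $K_2^3$, and apply Lemma~\ref{lem:PathReplacement} once per Mader-operation, which the paper itself states ``proves Lemma~\ref{lem:PathReplacement} and implies the following theorem.'' Your added bookkeeping---$n=O(m)$ from $\delta(G)\geq 3$ and the $O(m)$ bound on the number of operations since each adds at least one edge---is exactly the right way to make the linear total running time explicit.
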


The proposed algorithms for (1,1)-edge-orders and (2,1)-edge-orders (as well as the computation of edge-independent spanning trees in the next section) are \emph{certifying} in the sense of~\cite{McConnell2011}: For (1,1)-edge-orders through $st$, it suffices to check that every edge $e \neq st$ has indeed a smaller and larger neighboring edge. For (2,1)-edge-orders, it suffices to check in linear time that $D$ is an ear decomposition of $G$ and that $D$ satisfies Definition~\ref{def:edge-order}.1--3.

\section{Edge-Independent Spanning Trees}\label{sec:spanningtrees}
Let $k$ spanning trees of a graph be \emph{edge-independent} if they all have the same root vertex $r$ and, for every vertex $x \neq r$, the paths from $x$ to $r$ in the $k$ spanning trees are edge disjoint. The following conjecture was stated 1988 by Itai and Rodeh.

\begin{conjecture}[Edge-Independent Spanning Tree Conjecture~\cite{Itai1988}]
Every $k$-edge-connected graph contains $k$ edge-independent spanning trees.
\end{conjecture}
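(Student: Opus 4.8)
The plan is to reduce the conjecture to the existence of a higher-order analogue of the edge-orders developed in this paper, and then to attack that existence statement inductively. Concretely, define a $(k-1,1)$\emph{-edge-order} of $G$ to be a decomposition of $E(G)$ into an ordered family of subgraphs, rooted at a vertex $r$, such that every prefix $G_i$ is $(k-1)$-edge-connected and every suffix $\overline{G_i}$ is connected and contains $r$; this is the direct generalization of Definition~\ref{def:edge-order}, whose non-separateness clause (cf.\ Lemma~\ref{lem:non-separating}) is exactly the device that forces the suffixes to stay connected. Such an order immediately yields $k$ edge-independent spanning trees, mirroring Lemma~\ref{lem:3connected} with $3$ replaced by $k$: for any vertex $v$, let $G_i$ be the first prefix containing $v$; then $G_i$ is $(k-1)$-edge-connected and contains both $v$ and $r$, giving $k-1$ edge-disjoint paths from $v$ to $r$ inside $G_i$, while the connected suffix $\overline{G_i}$ (which also contains $v$ and $r$) supplies a $k$-th such path, edge-disjoint from the others because $G_i$ and $\overline{G_i}$ share no edge.

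To turn these per-vertex path systems into a single family of $k$ simultaneously edge-independent spanning trees, I would process the order from the last ear backward and assign, to each newly created vertex, its incident edges to the $k$ trees by a fixed rule determined by the $birth$- and $last$-values of the vertices involved, exactly as the $st$-edge-ordering distributes two trees and the $(2,1)$-edge-order distributes three in Section~\ref{sec:spanningtrees}. The consistency of this assignment across all vertices is what makes the root-paths pairwise edge-disjoint for every $v$ at once, rather than merely vertex by vertex; establishing that such a global rule extends to arbitrary $k$ is a secondary, but I believe tractable, part of the plan once the order itself is in hand.

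The existence of a $(k-1,1)$-edge-order would be proved by induction along a constructive characterization of $k$-edge-connected graphs, precisely as Mader's Theorem~\ref{thm:mader} drives the $k=3$ case in Section~\ref{sec:computing}. The base case is a minimal $k$-edge-connected graph, such as two vertices joined by $k$ parallel edges, which carries a trivial order. For the inductive step I would invoke Lov\'asz's splitting-off theorem, which builds every $k$-edge-connected graph from such a base via a sequence of edge-splittings and degree-increasing operations, and then show—in the spirit of the $leg$, $belly$ and $head$ operations—that each generating operation can be matched by a bounded, local modification of the order preserving both the $(k-1)$-edge-connectivity of every prefix and the connectivity (non-separateness) of every suffix.

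The hard part, and the reason the conjecture remains open for $k \geq 4$, is exactly this inductive step. First, no constructive generation theorem for $k$-edge-connected graphs is known that enjoys the locality that Mader's theorem has for $k=3$: Lov\'asz splitting-off alters edges globally and need not admit an ordering-compatible refinement, so there is no evident analogue of a Mader-sequence to induct along. Second, maintaining the two invariants at once is far more delicate than for $k \le 3$, since a single splitting step can destroy the $(k-1)$-edge-connectivity of a prefix or disconnect the complementary graph $\overline{G_i}$, and controlling both simultaneously demands structural control over higher edge-connectivity that is not presently available—the same barrier that confines the vertex $(k,l)$-orders of Table~\ref{tab:orders} to $k+l \le 4$ on general graphs. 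I therefore expect a full proof to hinge on a new constructive characterization of $k$-edge-connected graphs tailored to preserve suffix-connectivity under each operation, and I would regard obtaining that characterization as the central obstacle of the entire programme.
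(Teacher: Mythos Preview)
The statement you were asked to prove is a \emph{conjecture}; the paper does not prove it and explicitly says it is open for every $k\geq 4$. What the paper does establish is the cases $k=2$ and $k=3$ (Section~\ref{sec:spanningtrees}), the latter via the $(2,1)$-edge-order of Theorem~\ref{thm:main}. Your write-up is not a proof but a research programme, and you are candid about that; in this sense your ``proposal'' and the paper agree perfectly on the status of the problem.

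As for the programme itself, it is essentially the natural extrapolation of the paper's $k=3$ argument: a $(k-1,1)$-edge-order would yield $k-1$ trees from the $(k-1)$-edge-connected prefix together with one tree from the connected suffix, mirroring how the paper extracts $T_1,T_2$ from a consistent $(1,1)$-edge-order of $G_{birth(x)}$ and $T_3$ from the non-separating $\overline{G_{birth(x)}}$. The obstacles you name---the lack of a Mader-type generation theorem for general $k$ with the locality needed to update the order, and the difficulty of maintaining prefix-$(k-1)$-edge-connectivity and suffix-connectivity simultaneously---are exactly the right ones, and the paper makes no claim to overcome them. One caveat worth flagging: even for $k=3$ the step from ``$k$ edge-disjoint $v$--$r$ paths for each $v$'' to ``$k$ edge-independent spanning trees'' is not automatic; the paper needs the \emph{consistency} of the $(1,1)$-edge-order (down- and up-consistent trees) and gives an explicit counterexample (Figure~\ref{fig:NotConsistent}) showing that a naive choice fails. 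Your plan acknowledges this secondary difficulty but perhaps underestimates it: for general $k$ one would need an inductive hypothesis asserting not merely that each prefix is $(k-1)$-edge-connected but that it already carries $k-1$ edge-independent spanning trees compatible with those of earlier prefixes, which is a strictly stronger invariant than a $(k-1,1)$-edge-order alone provides.
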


The conjecture has been proven constructively for $k=2$~\cite{Itai1988} and $k=3$~\cite{Gopalan2011} with running times $O(m)$ and $O(n^2)$, respectively, for computing the corresponding edge-independent spanning trees. For every $k \geq 4$, the conjecture is open. We first give a short description of an algorithm for $k=2$ and then show the first linear-time algorithm for $k=3$.

For $k=2$, compute the (1,1)-edge-order $<$ through $tr$ using Lemma~\ref{lem:11edgeorder}. The first tree $T_1$ consists of the edges $min(x)$ for all vertices $x \neq r$ (as defined in Lemma~\ref{lem:11edgeorder}), while the second tree $T_2$ consists of $tr$ and the edges $max(x)$ for all vertices $x \notin \{r,t\}$. Then $T_1$ and $T_2$ are spanning, as no edge can be taken twice, and edge-independent, as, from every vertex $x$, the path of smaller edges to $r$ obtained by iteratively applying $min()$ must be edge-disjoint from the path of larger edges to $r$.

For $k=3$, choose any vertex $r$ and two distinct edges $tr$ and $ru$ in the 3-edge-connected graph $G$. Compute a (2,1)-edge-order $D$ through $tr$ and avoiding $ru$ in time $O(m)$ using Theorem~\ref{thm:main}. For every vertex $x \in V$, the idea is now to find \emph{two} edge-disjoint paths from $x$ to $r$ in $G_{birth(x)}$ (after all, $G_{birth(x)}$ is 2-edge-connected and thus contains a (1,1)-edge-order) and a \emph{third} path from $x$ to $r$ in $\overline{G_{birth(x)}}$ using the non-separateness of $D$. The subtle part is to make this idea precise: We have to construct the first tree $T_1$ in such a consistent way that the paths of smaller edges from $x$ to $r$ for \emph{all} vertices $x \in V$ are contained in $T_1$ (and the same for $T_2$ and paths of larger edges).

For a (1,1)-edge-order $<$ through $tr$ of $G$, let a spanning tree $T_1 \subseteq G$ be \emph{down-consistent} to a given (2,1)-edge-order through $tr$ if (a) every path in $T_1$ to $r$ is strictly decreasing in $<$ and (b) for every $0 \leq i \leq m-n$, $T_1 \cap G_i$ is a spanning tree of $G_i$ (analogously, \emph{up-consistent} spanning trees $T_2$ of $G-r$ are defined by strictly increasing paths to $t$). Now let a (1,1)-edge-order be \emph{consistent} to a given (2,1)-edge-order $D'$ if $G$ contains $r$-rooted spanning trees $T_1$ and $T_2$ that are down- and up-consistent to $D'$, respectively. By the very same argument as used for $k=2$, $T_1$ and $T_2 + tr$ are edge-independent and, in addition, do not use any edge of $\overline{G_{birth(x)}}$ for any $x \in V$.

In fact, the special (1,1)-edge-order that is computed by Lemma~\ref{lem:11edgeorder} is consistent to $D$: There, the trees $T_1$ and $T_2$ consist of the edges $min(x)$ and $max(x)$ for $x \in V$, which makes $T_1$ down-consistent and $T_2+tr$ up-consistent to $D$ (see Figure~\ref{fig:Consistent}). We note that the simpler and more established definition of consistent (1,1)-edge-orders~\cite{Cheriyan1988} as \emph{orders that remain (1,1)-edge-orders for all subgraphs} $G_i$, $0 \leq i \leq m-n$, does not suffice here (see Figure~\ref{fig:NotConsistent}).

\begin{figure}[h!t]
	\centering
	\subfloat[A consistent order $<$ and the resulting three edge-independent spanning trees.]{
		\includegraphics[page=1,scale=1.0]{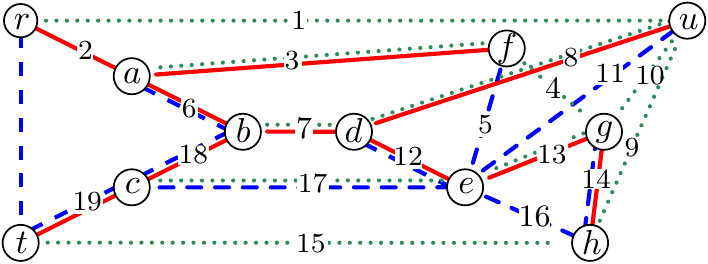}
		\label{fig:Consistent}
	}
	\hspace{0.5cm}
	\subfloat[Although $<$ is a (1,1)-edge-order for every $G_i$, $0 \leq i \leq m-n$, $<$ is not consistent: Any down-consistent tree contains the root-paths $12,11,10,2$ in $G_2$ and $6,5,3,2$ in $G_5$, which implies a cycle.]{
		\includegraphics[page=1,scale=1.0]{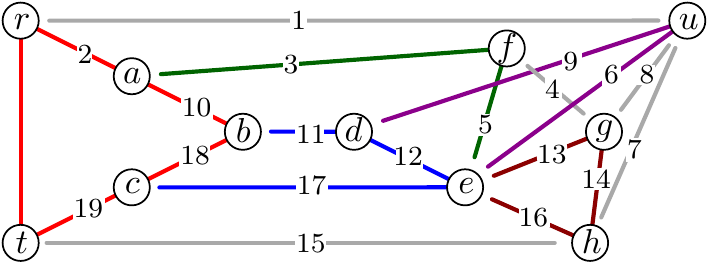}
		\label{fig:NotConsistent}
	}
	\caption{(1,1)-edge-orders that are consistent and not consistent to the (2,1)-edge-order of Figure~\ref{fig:ExampleOrder2}.}
	\label{fig:Consistence}
\end{figure}

It remains to construct the third edge-independent spanning tree. For every edge $e \neq ru$ of $G$, we compute a pointer to an arbitrary neighboring edge $e'$ in $\overline{G_{birth(e)}}$. This edge $e'$ exists, as $D$ is non-separating, and satisfies $birth(e') > birth(e)$. Similarly, for every vertex $x \in V-r-u$, we compute a pointer to an incident edge $e'$ of $x$ with $birth(e') > birth(x)$. Both computations take linear total time by comparing $birth$ values. The third edge-independent spanning tree is then the union of $ur$ and the $u$-rooted spanning tree of $G-r$ that interprets the pointers as parent edges. Hence, three edge-independent spanning trees can be computed in time $O(m)$.

\shortversion{One could be interested in the reason why, e.g., the reduction from $k$-edge- to $k$-vertex-connectivity by Galil and Italiano~\cite{Galil1991} cannot be applied in order to obtain edge-independent spanning trees from their vertex-counterparts; we give a reason in the appendix.}
\longversion{
\paragraph{Relation to vertex-independent spanning trees.}
The conjecture above has also received considerable attention for the vertex-case. Recently, a linear-time algorithm for computing three vertex-independent spanning trees of a 3-connected graph was given by~\cite{Schmidt2014}.
One could be interested in the reason why, e.g., the reduction from $k$-edge- to $k$-vertex-connectivity by Galil and Italiano~\cite{Galil1991} cannot be applied to modify the $3$-edge-connected input graph $G$ to a $3$-connected one such that three vertex-independent spanning trees for the latter give three edge-independent spanning trees in $G$. 
The reason is that, although such a reduction attempt is able to give three edge-disjoint paths between two given vertices, for multiple vertex pairs, the union of these paths may form cycles (see Figure~\ref{fig:exGI}). 

\begin{figure}[h!tb]
	\centering
	\subfloat[A 3-edge-connected graph $G$.]{
	\makebox[5cm]{
		\includegraphics[scale=0.6]{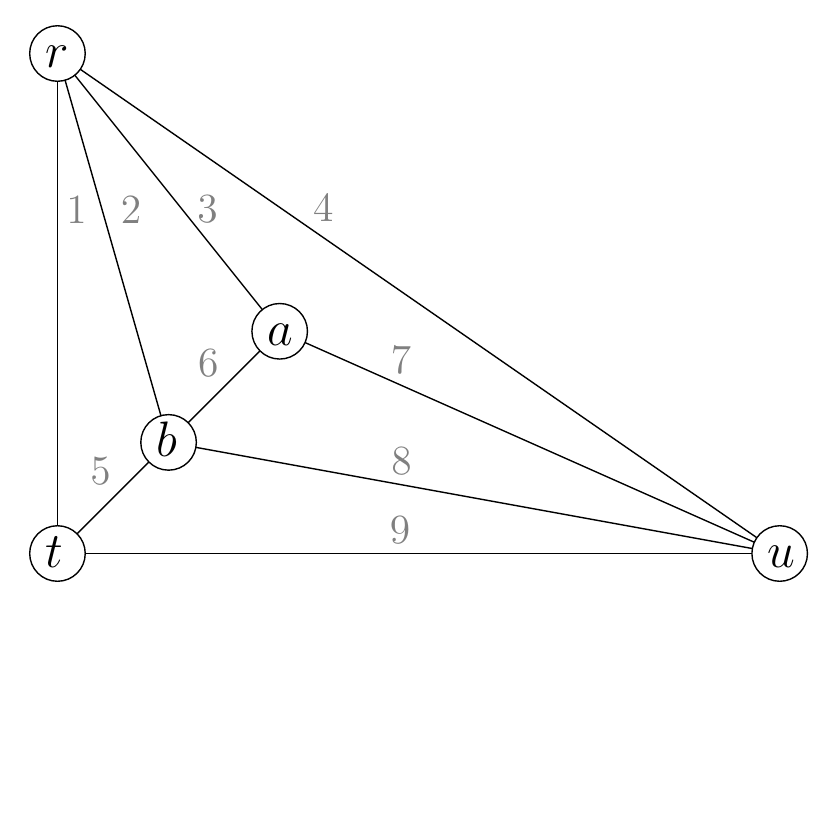}
		\label{fig:exGI_1}
	}}
	\hspace{1cm}
	\subfloat[The 3-connected graph $\hat{G}$ to which $G$ is reduced to using~\cite{Galil1991}, and a (2,1)-order of $\hat{G}$ through $r\hat{t}$ avoiding $\hat{u}$ ($t$ and $u$ have to be replaced, as $\hat{G}$ does not contain $rt$ and $ru$ anymore). Gray lines depict short ears.]{
		\includegraphics[scale=0.4]{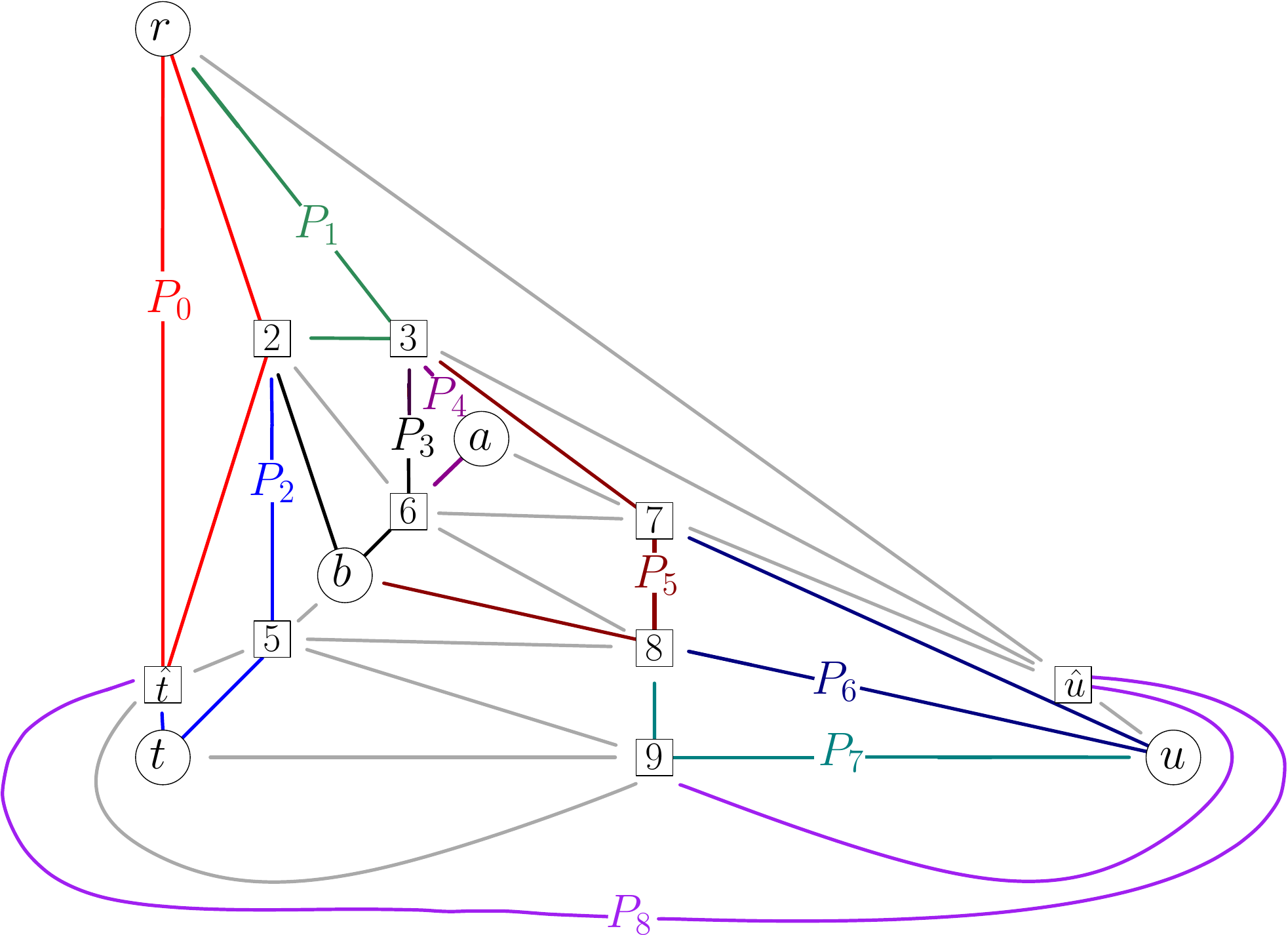}
		\label{fig:exGI_2}
	}
	\hspace{1cm}
	\subfloat[The three vertex-independent spanning trees $T_1,T_2,T_3$ of $\hat{G}$ implied by the (2,1)-order of $\hat{G}$.]{
		\includegraphics[scale=0.4]{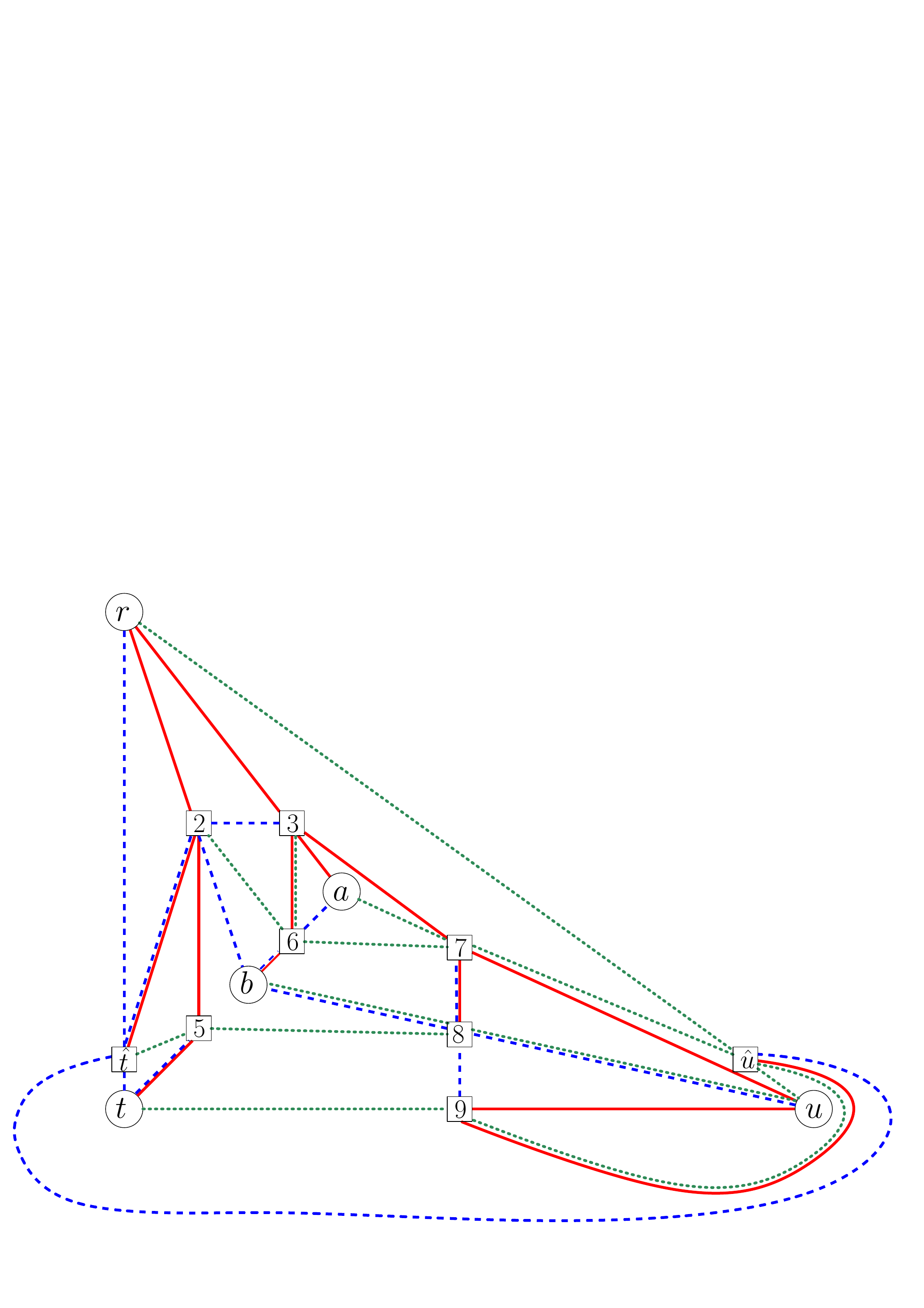}
		\label{fig:exGI_3}
	}
	\hspace{1cm}
	\subfloat[``Corresponding'' subgraphs of $T_1$, $T_2$ and $T_3$ in $G$. The red subgraph contains a cycle.]{
		\includegraphics[scale=0.6]{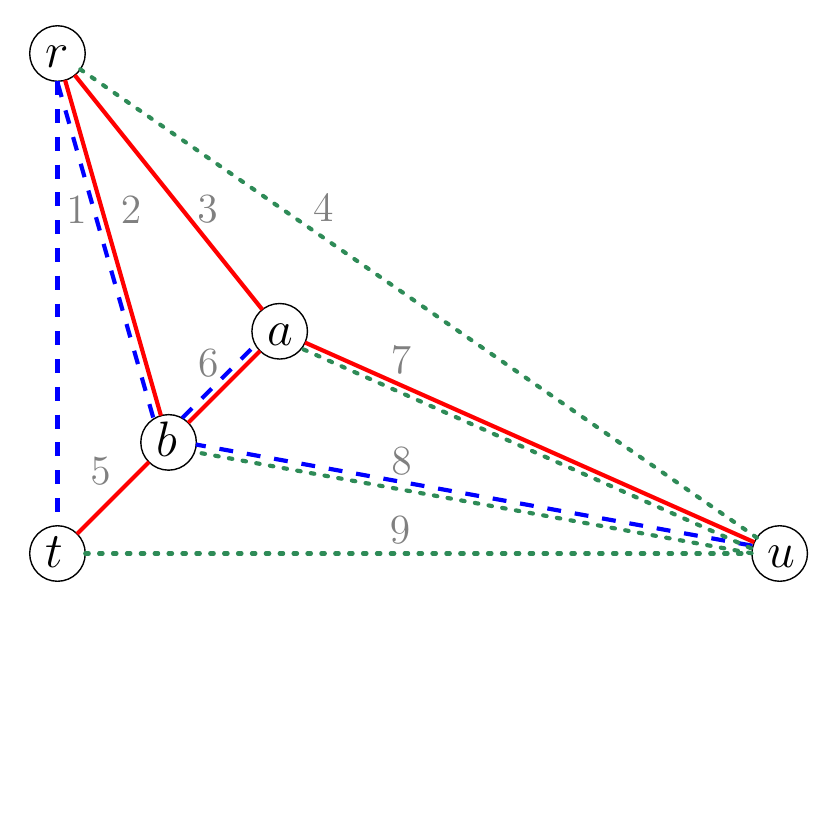}
		\label{fig:exGI_4}
	}
	\caption{The reduction~\cite{Galil1991} cannot be applied to find edge-independent spanning trees directly, as it may construct cycles.}
	\label{fig:exGI}
\end{figure}
That such a reduction could indeed be elusive, might also be argued by the fact that we still do
not know any way of reducing the existence of edge-independent spanning trees to the existence of their vertex-counterpart. In fact, a proposed such reduction turned out to be wrong.
}

\bibliographystyle{abbrv}
\bibliography{references}
\end{document}